\newtheorem{theorem}{Theorem}[section]
\newtheorem{proposition}[theorem]{Proposition}
\newtheorem{lemma}[theorem]{Lemma}
\newtheorem{example}[theorem]{Example}
\newcommand{\Z}{\mathbb{Z}}    
\renewcommand{\epsilon}{\varepsilon}    
\newcommand{\lbeq}[1]{\label{eq:#1}}
\newcommand{\mP}{{\mathbb P}}
\newcommand{\vep}{\varepsilon}
\begin{document}
\title{Stability of energy landscape for Ising models}
\author{
Bruno Hideki Fukushima-Kimura\footnote{Faculty of Science, Hokkaido University,
Japan.} \hspace{2mm}Akira Sakai\footnotemark[1]\ \footnote{https://orcid.org/0000-0003-0943-7842} \hspace{2mm}  Hisayoshi Toyokawa\thanks{Institute of Mathematics for Industry, Kyushu University,
Japan.}  \hspace{2mm} Yuki Ueda\thanks{Department of Mathematics, Hokkaido University of Education, Japan.}
}
\date{}

\maketitle
\abstract{In this paper, we explore the stability of the energy landscape of an Ising Hamiltonian when subjected to two kinds of perturbations: a perturbation on 
the coupling coefficients and external fields,
and a perturbation on the underlying graph structure.
We give sufficient conditions so that the ground states of a given Hamiltonian are stable under perturbations of the first kind in terms of order preservation.
Here by order preservation we mean that the ordering of energy corresponding to two spin configurations in a perturbed Hamiltonian will be preserved in the original Hamiltonian up to a given error margin.
We also estimate the probability that the energy gap between ground states for the original Hamiltonian and the perturbed Hamiltonian is bounded by a given error margin when the coupling coefficients and local external magnetic fields of the original Hamiltonian are i.i.d. Gaussian random variables. In the end we show a concrete example of a system which is stable under perturbations of the second kind.}


\section{Introduction}\label{sec:intro}

Finding optimal solutions for combinatorial optimization problems, some of which are known to be NP-hard, is a very important problem.
Among many possible approaches to such problems, the application of Ising models to solve real social problems has been getting attention due to its versatility (see \cite{L14}).
More precisely, a given social combinatorial optimization problem can be mapped into a Hamiltonian $H$ on a graph $G=(V,E)$, whose expression is given by
\begin{align*}
H(\sigma)=-\sum_{b=\{x,y\}\in E}J_b\sigma_x\sigma_y-\sum_{x\in V}h_x\sigma_x
\end{align*}
for every Ising spin configuration $\sigma\in\{-1,1\}^V$, where $\{J_b\}_{b \in E}$ are coupling coefficients and $\{h_x\}_{x \in V}$ are local magnetic fields.
In that approach, an optimal solution for the intended combinatorial problem corresponds to a ground state (or global minimum) $\sigma_G$ of $H$, that is, $\sigma_G \in \mathrm{arg\, min}\,H$.
There are some well-known methods that can be applied to obtain a ground state.
Implementing a Markov chain Monte Carlo (such as Glauber dynamics and stochastic cellular automata) is known as a way to find an approximation for the Gibbs distribution whose highest peaks correspond to the ground states of $H$.
We refer for details to \cite{DSS12,H88,R13,HKKS19} and also \cite{HKKS21}.

However, as long as we use Ising machines or any computer to perform numerical simulations to find a ground state, we cannot avoid the error occurring due to the analog nature or the difficulty of representing real numbers (see \cite{AMH18}).
Because of these reasons, we should incorporate the error coming from the coupling coefficients and local magnetic fields by introducing a perturbed Hamiltonian.
Hence, our original Hamiltonian $H$ will be perturbed, originating a perturbed Hamiltonian $H_{\delta}$ whose coupling coefficients and local magnetic fields have a maximal error $\delta$.
Then, the following natural questions arise:
\begin{enumerate}[label=(\arabic*),start=1]
\item
For any pair of configurations which are ordered in terms of energy with respect to the perturbed Hamiltonian $H_\delta$, is that ordering preserved in the original Hamiltonian $H$, up to a given error margin? \label{question1}
\item
Given a Hamiltonian $H$ with coupling coefficients and local magnetic fields distributed as i.i.d. Gaussian random variables, what is the probability that the energy gap in $H$ between two ground states respectively for $H$ and the corresponding perturbed Hamiltonian $H_\delta$ is sufficiently small?\label{question2}
\end{enumerate}

In addition to the above questions \ref{question1} and \ref{question2}, the following problem is also important when using Ising machines and computers.
It may be somewhat a waste of resources taking all coupling coefficients and local magnetic fields into account.
It may be useful to ``eliminate" vertices of a given graph whose contribution to the total energy is relatively small, in order to save memories of computers.
Hence, we also have the following natural question:
\begin{enumerate}[label=(\arabic*),start=3]
\item
Can we find a subset of a given graph such that for an arbitrary choice of configuration outside of that region, the energy variation can be controlled? \label{question3}
\end{enumerate}
In this paper, we investigate the stability of energy landscape of a given Hamiltonian under perturbations from the view point of  order preservation, aiming at answering the questions we addressed above.
Thanks to the order preservation property, we can obtain better  estimates for the success probability of finding  a ground state  compared to the result given in \cite{AMH18}.

This paper is organized as follows.
In Section \ref{sec:settings},  we provide a precise formulation for the questions we just posed and raise them again.
In Section \ref{sec:solutions}, we answer the questions \ref{question1'} and \ref{question2'} from Section \ref{sec:settings}.
In Section \ref{pertgraph}, we provide an example together with a sufficient condition that guarantees a positive answer for question \ref{question3'}.


\section{Setting and the main questions}\label{sec:settings}
In this section, we introduce some necessary definitions and terminologies for discussing the stability of energy landscape.
Further we also introduce the notion of order preservation for a perturbed system, which plays a central role in this paper.
Here, order preservation means, roughly speaking, if we  take a ground state for a perturbed Hamiltonian (implemented by a device) then it should be close to the ground state for an original Hamiltonian (intended mathematical problem) in energy, up to a given error margin.

 Let us begin by introducing the precise setting.
Let $G=(V,E)$ be a finite simple graph with the vertex set $V$ and the edge set $E$.
The so-called {\it original Hamiltonian} $H$ with coupling coefficients $\{J_b\}_{b\in E}$ and external magnetic fields $\{h_x\}_{x\in V}$ on $G$ is defined by
\begin{align}
H(\sigma)=-\sum_{b=\{x,y\} \in E}J_b\sigma_x\sigma_y - \sum_{ x \in V}h_x\sigma_x
\end{align}
for each $\sigma=\{\sigma_x\}_{x\in V}\in\{-1,1\}^V$.
Such a function $H$ can be regarded as a cost function of an intended problem.
Given $\delta>0$, we denote by $H_\delta$ the {\it perturbed Hamiltonian} with the coupling coefficients $\{J'_b\}_{ b \in E}$ and external fields $\{h'_x\}_{x \in V}$, i.e.,
\begin{align}\lbeq{perturb}
H_\delta(\sigma)=-\sum_{ b=\{x,y\} \in E}J'_b\sigma_x\sigma_y-\sum_{ x \in V}h'_x
 \sigma_x
\end{align}
where the $J'_b$'s and $h'_x$'s satisfy the bounds $\sup_b\lvert J_b-J'_b\rvert\le\delta$ and $\sup_x\lvert h_x-h'_x\rvert\le\delta$.
This perturbation will be often interpreted as a round-off  in the following way.
Let $(J_b^{(1)}J_b^{(2)}\dots)$ and $(h_x^{(1)}h_x^{(2)}\dots)$ be the binary expansions of  the fractional parts of $J_b$ and $h_x$, i.e.,
\begin{align}
J_b=J_b^{(0)}+\sum_{i\ge1}\frac{J_b^{(i)}}{2^i},\quad h_x=h_x^{(0)}+\sum_{i\ge1}\frac{h_x^{(i)}}{2^i}
\end{align}
where $J_b^{(0)},h_x^{(0)}\in\Z$ and $J_b^{(i)},h_x^{(i)}\in\{0,1\}$ for $i\ge1$. If we set $J_b'=J_b^{(0)}+\sum_{i=1}^{N}\frac{J_b^{(i)}}{2^i}$ and $ h_x'=h_x^{(0)}+\sum_{i=1}^N\frac{h_x^{(i)}}{2^i}$ in the equation (\ref{eq:perturb}), then the error $\delta$ can be taken as $2^{-N}$.
It means that the perturbed Hamiltonian $H_{\delta}$ is obtained by  rounding off the given parameters $J_b$'s and $h_x$'s uniformly from  the $(N+1)$-th digit of their binary expansions.

The main purpose of this paper is to clarify the stability of the ground states for a given Hamiltonian under a perturbation in terms of order preservation.
In this paper, we will answer the following questions:
\begin{enumerate}[label=(\arabic*'),start=1]
\item
Find a $\delta>0$ corresponding to a given $\vep>0$, so that, for any pair $(\sigma,\tau)$ that satisfies 
$H_{\delta}(\sigma)\ge H_{\delta}(\tau)$, the ordering is preserved in $H$
up to the error margin $\vep\sup_{\xi,\eta}\left\lvert H(\xi)-H(\eta)\right\rvert$, 
i.e., 
\begin{align}
H(\sigma)\ge H(\tau)-\vep\sup_{\xi,\eta}\left\lvert H(\xi)-H(\eta)\right\rvert.
\end{align}
Here, $\sup_{\xi,\eta}\left\lvert H(\xi)-H(\eta)\right\rvert$ is the total margin of the original Hamiltonian.  \label{question1'}
\item \label{question2'}
Let $(\Omega, \mathcal{F},\mathbb{P})$ be a probability space and let $\{J_b\}_{b\in E}$ and $\{h_x\}_{x\in V}$ be mutually independent standard Gaussian random variables on this probability space.
Estimate the probability that the energy gap in $H$ between ground states for $H$ and $H_{\delta}$, say $\sigma_G$ and $\tilde{\sigma}_G$, respectively, is controled by the given error margin, explicitly,
\begin{align}\label{eq:quest2}
\mathbb{P}\left(0 \leq H(\tilde{\sigma}_G)-H(\sigma_G)\le\vep\sup_{\xi,\eta}\left\lvert H(\xi)-H(\eta)\right\rvert\right).
\end{align}  
\end{enumerate}

A different aspect of stability of a given system is to find a nontrivial subsystem so that the energy gap between any two spin configurations whose spins restricted to the subgraph coincide is bounded above by a given error margin.
Also, at the same time, we require that the number of vertices that can be disregarded is at least of order $N^{\alpha}$, where $N=\lvert V\rvert$ and $\alpha\in {[}0,1)$, so that such a number can go to infinity as $N\to\infty$.
In the later part of this paper, we  answer the following question for a particular case:
\begin{enumerate}[label=(\arabic*'),start=3]
\item 
Let $\lvert V\rvert =N$, and let $\{J_b\}_{b\in E}$ and $\{h_x\}_{x\in V}$ be mutually independent standard Gaussian random variables.
Find a subset $V_0\subset V$ for a given $\vep>0$ and $\alpha\in {[}0,1)$ such that
\begin{align*}
\mP\left(\sup_{\sigma,\tau\in\{-1,1\}^N} \left\lvert H(\sigma)-H(\sigma_{V_0}, \tau_{V\setminus V_0}) \right\rvert < \vep\sup_{\xi,\eta}\left\lvert H(\xi)-H(\eta)\right\rvert  \;\&\; CN^{\alpha} \leq \lvert V\setminus V_0\rvert < N \right)
\end{align*}
is close to $1$, where $\sigma_{V_0}\in\{-1,1\}^{V_0}$ is the spin configuration $\sigma$ restricted to $V_0$ and $\tau_{V\setminus V_0}\in\{-1,1\}^{V\setminus V_0}$ is the restriction of the spin configuration $\tau$ to $V\setminus V_0$. \label{question3'}
\end{enumerate}

Questions \ref{question1'}, \ref{question2'} and \ref{question3'} above correspond to questions \ref{question1}, \ref{question2} and \ref{question3} from Section \ref{sec:intro}, respectively.
In Section \ref{sec:solutions}, we investigate the first two questions above, where for the second one we adopt two different approaches. We obtain answers for question \ref{question2'} by means of a method involving the $L^\infty$-distance and a graph's structure approach, and we compare these two methods for three different graphs.
Specifically, we consider sufficient conditions on the perturbation $\delta$ to satisfy order preservation, and calculate the probability that such a sufficient condition holds.
In Section \ref{pertgraph}, we obtain an answer for the question \ref{question3'} when the graph is a one-dimensional torus $\Z/N\Z$ without external fields.


\section{Stability under  a Hamiltonian perturbation} \label{sec:solutions}
This part is dedicated to provide solutions for questions \ref{question1'} and \ref{question2'} just posed in the end of the previous section. Before we proceed to the next sections, let us introduce the quantity $R_{H}$  defined by
\begin{equation}
R_{H}\coloneqq\max_{\xi,\eta}\lvert H(\xi)-H(\eta)\rvert,
\end{equation}
which is defined whenever a Hamiltonian $H$ is given. Moreover, if $G = (V,E)$ is a finite simple graph, then we define $k_{G}$ by
\begin{equation}
	k_{G} := |E| + |V|.
\end{equation}

Keeping in mind the mathematical setting introduced in the beginning of Section \ref{sec:settings}, let us start by showing that the order preservation property holds, that is,
let us first answer the question \ref{question1'}, which consists in finding a  $\delta>0$ corresponding to a given $\epsilon>0$ such that $H_\delta(\sigma)\ge H_\delta(\tau)$ implies $H(\sigma)\ge H(\tau)-\epsilon R_{H}$; and later on, assuming some randomness on the spin-spin couplings and external fields, we adopt two different approaches to answer question \ref{question2'} and estimate the probability that the condition $H(\tilde{\sigma}_G)- H(\sigma_G)\le \epsilon R_{H}$ is satisfied. 

In order to solve the second problem, we will adopt two distinct  approaches: a method that relies on uniform estimates and a method where combinatorial estimates are considered instead, which will be presented in Sections \ref{sec:FA} and \ref{sec:SA}, respectively.
In the last part of this section, we compare these two methods and conclude that depending on the underlying graph structure of the problem, one of them will give us a better lower bound for the probability from equation (\ref{eq:quest2}).


\subsection{Order preservation of energy}

The answer for question \ref{question1'} from Section \ref{sec:settings} is provided by Theorem \ref{thm:order_preservation}, however, let us show first a preliminary result.

In \cite{HKKS19}, we have already established lower bounds for the total margin $R_{H}$ of the Hamiltonian $H$, but for the reader's convenience we include  its proof in the present paper.

\begin{lemma}[See \cite{HKKS19}]\label{lem:total margin}
Let us consider a finite simple graph $G=(V,E)$ and a Hamiltonian $H$ written in the form
\begin{align*}
H(\sigma)=-\sum_{b=\{x,y\}\in E} J_b \sigma_x\sigma_y -\sum_{x\in V}h_x \sigma_x
\end{align*}
for each $\sigma\in \{-1,1\}^V$. Then, we have
\begin{align*}
R_{H}\ge \sqrt{v_H}, \qquad \text{where } v_H\coloneqq\sum_b J_b^2+\sum_x h_x^2.
\end{align*}
\end{lemma}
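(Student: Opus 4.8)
The plan is to run a second-moment argument over a uniformly random spin configuration. Let $\sigma$ be drawn uniformly from $\{-1,1\}^V$, so that the coordinates $\{\sigma_x\}_{x\in V}$ are i.i.d.\ symmetric $\pm1$ random variables. First I would record the first moment: since $\mE[\sigma_x]=0$ for every $x$ and $\mE[\sigma_x\sigma_y]=0$ for $x\neq y$, linearity of expectation gives $\mE[H(\sigma)]=0$. Next I would compute the second moment by expanding
\[
H(\sigma)^2=\Big(\sum_{b=\{x,y\}\in E}J_b\,\sigma_x\sigma_y+\sum_{x\in V}h_x\,\sigma_x\Big)^2
\]
into a sum of products of the monomials $\sigma_x\sigma_y$ (ranging over edges) and $\sigma_x$ (ranging over vertices). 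Because $G$ is simple these monomials are pairwise distinct elements of the Fourier--Walsh orthonormal basis of $\{-1,1\}^V$, so every cross term has zero mean and only the diagonal terms survive, yielding $\mE[H(\sigma)^2]=\sum_b J_b^2+\sum_x h_x^2=v_H$.

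Then I would convert these two moment identities into the asserted bound. From $\mE[H(\sigma)]=0$ it follows that $\max_\eta H(\eta)\ge 0\ge\min_\eta H(\eta)$, hence for every $\xi$ one has $\lvert H(\xi)\rvert\le \max_\eta H(\eta)-\min_\eta H(\eta)=R_H$. Combining this with the second-moment identity,
\[
v_H=\mE[H(\sigma)^2]\le \max_\xi H(\xi)^2\le R_H^2,
\]
and taking square roots gives $R_H\ge\sqrt{v_H}$. The degenerate case $v_H=0$ forces all $J_b$ and $h_x$ to vanish, so $H\equiv 0$ and $R_H=0=\sqrt{v_H}$; the inequality is then a trivial equality.

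The only mildly delicate point is the second-moment computation, namely checking that all the cross terms genuinely vanish: for two distinct edges $\{x,y\}\neq\{x',y'\}$ the product $\sigma_x\sigma_y\sigma_{x'}\sigma_{y'}$ contains at least two coordinates with odd multiplicity (the two edges share at most one endpoint since $G$ is simple), so its expectation is $0$; the edge--vertex cross terms $\mE[\sigma_x\sigma_y\sigma_z]$ always vanish because they involve an odd number of factors after cancellation; and $\mE[\sigma_x\sigma_{x'}]=\mathbbm{1}_{\{x=x'\}}$. Everything else is bookkeeping. If one prefers to avoid probabilistic language, the same argument reads: $\sum_{\sigma}H(\sigma)=0$ and $\sum_{\sigma}H(\sigma)^2=2^{\lvert V\rvert}v_H$ over the $2^{\lvert V\rvert}$ configurations, whence some $\xi$ has $H(\xi)^2\ge v_H$, and one concludes as before using $\max_\xi\lvert H(\xi)\rvert\le R_H$.
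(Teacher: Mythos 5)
Your argument is correct and is essentially the paper's proof: both take $\sigma$ uniform on $\{-1,1\}^V$, compute $\mE[H]=0$ and $\mE[H^2]=v_H$ via orthogonality of the monomials, and bound this second moment by $R_H^2$. The only difference is cosmetic — the paper invokes the general inequality $R_H\ge(\mE_\mu[H^2]-\mE_\mu[H]^2)^{1/2}$ without comment, while you spell out the justification (from $\mE[H]=0$ one gets $\min H\le 0\le\max H$, hence $\lvert H(\xi)\rvert\le R_H$ pointwise), which is a fine and slightly more self-contained way to finish.
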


\begin{proof}
For any probability measure $\mu$ on the configuration space $\{-1,1\}^V$, we have
\begin{align*}
R_{H}\ge  \left(\mathbb{E}_\mu[H^2]-\mathbb{E}_\mu[H]^2\right)^{1/2},
\end{align*}
where $\mathbb{E}_\mu$ stands for the expectation with respect to the probability measure $\mu$.
If $\mu$ is particularly chosen as the uniform distribution on $\{-1,1\}^V$, then we have

\begin{align*}
\mathbb{E}_\mu[H]&\coloneqq\frac{1}{2^{\lvert V\rvert}} \sum_\sigma H(\sigma)=0,{}
\end{align*}
and
\begin{align*}
\mathbb{E}_\mu[H^2]&\coloneqq\frac{1}{2^{\lvert V\rvert}} \sum_\sigma H(\sigma)^2\\
&=\frac{1}{2^{\lvert V\rvert}} \sum_\sigma \left( -\sum_{b=\{x,y\}\in E} J_b\sigma_x\sigma_y-\sum_{x\in V}h_x\sigma_x\right)^2 \\
&=\frac{1}{2^{\lvert V\rvert}} \sum_\sigma \left(\sum_{b,b'\in E} J_bJ_{b'}\sigma_x\sigma_y\sigma_{x'}\sigma_{y'}+\sum_{x,x'\in V}h_xh_{x'}\sigma_x\sigma_{x'} \right)\\
&=\frac{1}{2^{\lvert V\rvert}} \sum_\sigma \left( \sum_{b\in E}J_b^2+\sum_{x\in V}h_x^2\right) ={v_H}.
\end{align*}
Therefore, $R_{H}\ge  \left(\mathbb{E}_\mu[H^2]-\mathbb{E}_\mu[H]^2\right)^{1/2} =\sqrt{v_H}$.
\end{proof}

In order to prove the next result, it is convenient to consider the following notation introduced by \cite{AMH18}.
For any Ising spin configurations $\sigma$ and $\tau$, we consider the sets $D_{\sigma,\tau} $ and $W_{\sigma,\tau} $ defined by
\[D_{\sigma,\tau} \coloneqq\{x\in V: \sigma_x\tau_x =-1\}\] 
and
\[W_{\sigma,\tau} \coloneqq\{\{x,y\}\in E: \sigma_x\sigma_y\tau_x\tau_y =-1\},\]
where the products $\sigma_x\tau_x$ and $\sigma_x\sigma_y\tau_x\tau_y$ are called the {\it spin overlap} and the {\it link overlap}, respectively.

\begin{theorem}\label{thm:order_preservation}
Given $\epsilon>0$ and configurations $\sigma$ and $\tau$, if the condition
\begin{align*}
0 < \delta k_{G} \le \frac{1}{2} \epsilon \sqrt{v_{H}}
\end{align*}
is satisfied, then $H_\delta(\sigma)\ge H_\delta(\tau)$ implies $H(\sigma)\ge H(\tau)-\epsilon R_{H}$.
\end{theorem}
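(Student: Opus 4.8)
The plan is to control the discrepancy between $H$ and $H_\delta$ \emph{uniformly} over the configuration space, and then chain three inequalities together with Lemma \ref{lem:total margin}. Since $H$ and $H_\delta$ differ only through their parameters, for an arbitrary $\sigma\in\{-1,1\}^V$ we have
\begin{align*}
H(\sigma)-H_\delta(\sigma)=-\sum_{b=\{x,y\}\in E}(J_b-J'_b)\sigma_x\sigma_y-\sum_{x\in V}(h_x-h'_x)\sigma_x,
\end{align*}
so, using $\lvert\sigma_x\sigma_y\rvert=\lvert\sigma_x\rvert=1$ together with the hypotheses $\sup_b\lvert J_b-J'_b\rvert\le\delta$ and $\sup_x\lvert h_x-h'_x\rvert\le\delta$, the triangle inequality gives the key bound
\begin{align*}
\lvert H(\sigma)-H_\delta(\sigma)\rvert\le\delta\lvert E\rvert+\delta\lvert V\rvert=\delta k_G\qquad\text{for every }\sigma\in\{-1,1\}^V.
\end{align*}

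With this in hand, the second step is a short chain of inequalities. Assume $H_\delta(\sigma)\ge H_\delta(\tau)$. Then
\begin{align*}
H(\sigma)\ge H_\delta(\sigma)-\delta k_G\ge H_\delta(\tau)-\delta k_G\ge H(\tau)-2\delta k_G,
\end{align*}
where the first and third inequalities use the uniform bound above and the middle one uses the hypothesis. The third and final step is to absorb the error: by assumption $2\delta k_G\le\epsilon\sqrt{v_H}$, and Lemma \ref{lem:total margin} yields $\sqrt{v_H}\le R_H$, hence $2\delta k_G\le\epsilon R_H$ and therefore $H(\sigma)\ge H(\tau)-\epsilon R_H$, which is exactly the claimed order preservation.

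There is no serious obstacle here: the argument is essentially the triangle inequality, and the only place the earlier development is genuinely needed is the passage from the natural bound $\epsilon\sqrt{v_H}$ to the desired bound $\epsilon R_H$, which is precisely the content of Lemma \ref{lem:total margin}. I note one alternative that avoids the uniform estimate: using the sets $D_{\sigma,\tau}$ and $W_{\sigma,\tau}$ one can write $H(\sigma)-H(\tau)=-2\sum_{b\in W_{\sigma,\tau}}J_b\sigma_x\sigma_y-2\sum_{x\in D_{\sigma,\tau}}h_x\sigma_x$ and likewise for $H_\delta$, so that
\begin{align*}
\bigl(H(\sigma)-H(\tau)\bigr)-\bigl(H_\delta(\sigma)-H_\delta(\tau)\bigr)=-2\sum_{b\in W_{\sigma,\tau}}(J_b-J'_b)\sigma_x\sigma_y-2\sum_{x\in D_{\sigma,\tau}}(h_x-h'_x)\sigma_x,
\end{align*}
whose absolute value is at most $2\delta(\lvert W_{\sigma,\tau}\rvert+\lvert D_{\sigma,\tau}\rvert)\le2\delta k_G$; combined with $H_\delta(\sigma)-H_\delta(\tau)\ge0$ this gives the same conclusion with the same constant. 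I would present the first route as the main proof for its brevity.
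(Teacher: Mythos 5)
Your proposal is correct: the uniform bound $\lvert H(\sigma)-H_\delta(\sigma)\rvert\le\delta k_G$, the three-term chain under the hypothesis $H_\delta(\sigma)\ge H_\delta(\tau)$, and the final absorption $2\delta k_G\le\epsilon\sqrt{v_H}\le\epsilon R_H$ via Lemma \ref{lem:total margin} give exactly the claimed implication, with the same constant as in the paper. The difference from the paper's own proof is only in the intermediate bookkeeping: the paper does \emph{not} use the $L^\infty$ estimate here, but instead writes $H(\tau)-H(\sigma)\le\bigl(H_\delta(\sigma)-H(\sigma)\bigr)-\bigl(H_\delta(\tau)-H(\tau)\bigr)$ and expresses the right-hand side through the overlap sets, obtaining the sharper intermediate bound $2\delta\bigl(\lvert W_{\sigma,\tau}\rvert+\lvert D_{\sigma,\tau}\rvert\bigr)$ before relaxing it to $2\delta k_G$ — this is precisely the ``alternative'' you sketch and set aside for brevity. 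For Theorem \ref{thm:order_preservation} itself the two routes are equivalent, and your main route is in fact the argument the paper uses later for the ground-state gap in Section \ref{sec:FA} (leading to Theorem \ref{thmLinfty}). What the paper's choice buys is reuse: the intermediate quantity $\lvert W_{\sigma,\tau}\rvert+\lvert D_{\sigma,\tau}\rvert$ is exactly what is exploited in Section \ref{sec:SA}, via equation (\ref{eq:OPbound}) and Proposition \ref{prop:unif_W+D}, to obtain the graph-dependent improvement of Theorem \ref{thmop}; so if you present the $L^\infty$ route as the main proof, you would want to record the overlap-set computation anyway for that later purpose.
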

\begin{proof}
If we suppose that $H_\delta(\sigma)\ge H_\delta(\tau)$, then, we have
\begin{align*}
H(\tau)-H(\sigma)& = \left(H(\tau)-H_\delta(\tau)\right) + H_\delta(\tau)-H_\delta(\sigma) + \left(H_\delta(\sigma)-H(\sigma)\right)\\
&\le \left(H_\delta(\sigma)-H(\sigma)\right) - \left(H_\delta(\tau)- H(\tau)\right)\\
&\le \sum_{b=\{x,y\}\in E} \left\lvert J_b-J_b'\right\rvert \left\lvert\sigma_x\sigma_y-\tau_x\tau_y\right\rvert+ \sum_{x\in V}\left\lvert h_x-h_x'\right\rvert \left\lvert\sigma_x-\tau_x\right\rvert\\
&=2\sum_{b\in W_{\sigma,\tau}}\left\lvert J_b-J_b'\right\rvert + 2\sum_{x\in D_{\sigma,\tau}}\left\lvert h_x-h_x'\right\rvert\\
&\le 2\delta (\lvert W_{\sigma,\tau}\rvert+\lvert D_{\sigma,\tau}\rvert).
\end{align*}
Since $\lvert W_{\sigma,\tau}\rvert \leq |E|$, $\vert D_{\sigma,\tau}\rvert \leq |V|$, $k_{G} = |E| + |V|$, and  $R_{H}\ge \sqrt{v_{H}}$, then, by our assumption, we obtain
\begin{align*}
H(\tau)-H(\sigma)\le 2\delta k_{G} \le \epsilon \sqrt{v_{H}} \le \epsilon R_{H}.
\end{align*}
Therefore, the conclusion of this theorem  follows.
\end{proof}

\subsection{Stability of ground states: first approach}\label{sec:FA}
In the previous subsection, we did not assume any randomness on the spin-spin couplings $J_b$'s and local external fields $h_x$'s.
In this subsection, let us consider the same  setting as stated in question \ref{question2'} from Section \ref{sec:settings}.
Precisely speaking, we assume that $\{J_b\}_{b \in E}$ and $\{h_x\}_{x \in V}$ are mutually independent random variables distributed according to a standard Gaussian distribution. 

Under such assumptions, let us estimate the probability that  the inequality $H(\tilde{\sigma}_G)-H(\sigma_G)\le \epsilon R_{H}$  holds,
 where $\epsilon$ is a given positive constant, by using a method that relies on uniform bounds with respect to certain spin configurations.
In the following lemma, we provide an upper bound for the difference $H(\tilde{\sigma}_G)-H(\sigma_G)$.

\begin{lemma}
Given $\delta>0$, if $\sigma_{G}$ and $\tilde{\sigma}_G$ are ground states for $H$ and $H_{\delta}$, respectively, then, we have 
\begin{equation}
H(\tilde{\sigma}_G)-H(\sigma_G)\le 2\delta k_{G}.
\end{equation}
\end{lemma}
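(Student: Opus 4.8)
The plan is to compare energies across the two Hamiltonians in a way that exploits both the minimality of $\sigma_G$ for $H$ and the minimality of $\tilde\sigma_G$ for $H_\delta$. First I would write the telescoping identity
\begin{align*}
H(\tilde\sigma_G)-H(\sigma_G)
&=\bigl(H(\tilde\sigma_G)-H_\delta(\tilde\sigma_G)\bigr)
+\bigl(H_\delta(\tilde\sigma_G)-H_\delta(\sigma_G)\bigr)
+\bigl(H_\delta(\sigma_G)-H(\sigma_G)\bigr).
\end{align*}
Since $\tilde\sigma_G\in\mathrm{arg\,min}\,H_\delta$, the middle term satisfies $H_\delta(\tilde\sigma_G)-H_\delta(\sigma_G)\le 0$, so it may be dropped. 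This leaves
\begin{align*}
H(\tilde\sigma_G)-H(\sigma_G)\le\bigl(H(\tilde\sigma_G)-H_\delta(\tilde\sigma_G)\bigr)+\bigl(H_\delta(\sigma_G)-H(\sigma_G)\bigr).
\end{align*}

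Next I would bound each of the two remaining differences by a uniform pointwise estimate on $|H(\sigma)-H_\delta(\sigma)|$. For any configuration $\sigma$,
\begin{align*}
\lvert H(\sigma)-H_\delta(\sigma)\rvert
\le\sum_{b=\{x,y\}\in E}\lvert J_b-J_b'\rvert\,\lvert\sigma_x\sigma_y\rvert
+\sum_{x\in V}\lvert h_x-h_x'\rvert\,\lvert\sigma_x\rvert
\le\delta\lvert E\rvert+\delta\lvert V\rvert=\delta k_G,
\end{align*}
using $|\sigma_x\sigma_y|=|\sigma_x|=1$ together with the perturbation bounds $\sup_b|J_b-J_b'|\le\delta$ and $\sup_x|h_x-h_x'|\le\delta$ and the definition $k_G=|E|+|V|$. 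Applying this twice — once to $\tilde\sigma_G$ and once to $\sigma_G$ — gives $H(\tilde\sigma_G)-H(\sigma_G)\le\delta k_G+\delta k_G=2\delta k_G$, which is the claim.

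There is no serious obstacle here; the only point requiring a little care is the bookkeeping of which ground state minimizes which Hamiltonian, so that the correct term in the telescoping sum is discarded. (Note that this argument re-derives, without assuming order preservation, the same inequality $H(\tilde\sigma_G)-H(\sigma_G)\le 2\delta k_G$ that appears inside the proof of Theorem~\ref{thm:order_preservation}; indeed the lower bound $H(\tilde\sigma_G)-H(\sigma_G)\ge 0$ is immediate from $\sigma_G$ being a ground state of $H$, so the full two-sided bound $0\le H(\tilde\sigma_G)-H(\sigma_G)\le 2\delta k_G$ holds, and combining it with $R_H\ge\sqrt{v_H}$ from Lemma~\ref{lem:total margin} yields the estimate $H(\tilde\sigma_G)-H(\sigma_G)\le\epsilon R_H$ under the hypothesis $2\delta k_G\le\epsilon\sqrt{v_H}$, which is exactly what the subsequent probabilistic analysis will exploit.)
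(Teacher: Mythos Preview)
Your proof is correct and follows essentially the same approach as the paper: the same telescoping identity, the same use of $H_\delta(\tilde\sigma_G)\le H_\delta(\sigma_G)$ to discard the middle term, and the same uniform pointwise bound $\lvert H(\sigma)-H_\delta(\sigma)\rvert\le\delta k_G$ (the paper phrases the last step as $\lVert H_\delta-H\rVert_\infty\le\delta k_G$, but the computation is identical).
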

\begin{proof}
It follows from the definition of a ground state that $H_\delta(\tilde{\sigma}_G)-H_\delta(\sigma_G)\le 0$, then
\begin{align*}
H(\tilde{\sigma}_G)-H(\sigma_G)&= \left(H(\tilde{\sigma}_G)-H_\delta(\tilde{\sigma}_G)\right) + H_\delta(\tilde{\sigma}_G)-H_\delta(\sigma_G) + \left(H_\delta(\sigma_G)-H(\sigma_G)\right)\\
&\le 2\lVert H_\delta-H\rVert_\infty,
\end{align*}
where $\|\cdot\|_{\infty}$ stands for the uniform norm, as usual. Furthermore, for any spin configuration $\sigma$, we have
\begin{align*}
\lvert H_{\delta}(\sigma)-H(\sigma)\rvert&=\left\lvert\sum_{b=\{x,y\}\in E} (J_b-J_b')\sigma_x\sigma_y+\sum_{x\in V}(h_x-h_x') \sigma_x\right\rvert\\
&\le \sum_{b\in E}\left\lvert J_b-J_b' \right\lvert+\sum_{x\in V} \left\rvert h_x-h_x'\right\rvert\\
&\le \delta (\lvert E\rvert+\lvert V\rvert)=\delta k_G.
\end{align*}
Then, $\lVert H_\delta-H\rVert_\infty \le \delta k_G$, therefore, we conclude the proof.
\end{proof}

By the lemma  above , it follows that
\begin{align*}\label{ineq:total}
\mathbb{P}\left(H(\tilde{\sigma}_G)-H(\sigma_G)\le \epsilon R_{H}\right) \ge \mathbb{P}\left( \delta k_{G} \le \frac{1}{2}\epsilon R_{H}\right),
\end{align*}
and by using the fact that $R_{H}\ge \sqrt{v_H}$ (see Lemma \ref{lem:total margin}), we conclude that
\begin{equation}
\mathbb{P}\left(H(\tilde{\sigma}_G)-H(\sigma_G)\le \epsilon R_{H}\right) \ge \mathbb{P}\left( \delta k_{G} \le \frac{1}{2} \epsilon \sqrt{v_{H}}\right).
\end{equation}
Finally, we have the following estimation for the probability that $H(\tilde{\sigma}_G)-H(\sigma_G)\le \epsilon R_{H}$ holds, which consists of one of the answers for the question \ref{question2'}.

\begin{theorem}\label{thmLinfty}
Let $\{J_b\}_{b \in E}$ and $\{h_x\}_{x \in V}$ be mutually independent standard Gaussian random variables.  It follows that
\begin{equation}
\mathbb{P} \left( H(\tilde{\sigma}_G)-H(\sigma_G)\le \epsilon R_{H} \right) \ge 1- \gamma\left(k_G; \left(\frac{2\delta k_G}{\epsilon}\right)^2\right),
\end{equation}
where $\gamma(s;x)$ is the distribution function of the chi-square distribution with $s>0$ degrees  of freedom, that is, 
\begin{align*}
\gamma(s;x)\coloneqq\frac{1}{2^{s/2}\Gamma(s/2)}\int_0^x t^{s/2-1}e^{-t/2} dt
\end{align*}
for $x \geq 0$, and $\gamma(s;x) \coloneqq 0$ for $x<0$.
\end{theorem}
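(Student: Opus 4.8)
The plan is to combine the chain of inequalities already assembled immediately before the statement with the elementary observation that, under the Gaussian hypothesis, the quantity $v_H=\sum_{b\in E}J_b^2+\sum_{x\in V}h_x^2$ has a chi-square distribution with $k_G$ degrees of freedom. Recall that the computation preceding the theorem gives
\begin{align*}
\mathbb{P}\left(H(\tilde{\sigma}_G)-H(\sigma_G)\le\epsilon R_{H}\right)\ge\mathbb{P}\left(\delta k_G\le\frac{1}{2}\epsilon\sqrt{v_H}\right),
\end{align*}
so it suffices to bound the right-hand side from below by $1-\gamma\!\left(k_G;(2\delta k_G/\epsilon)^2\right)$; in fact we will see it equals that value.

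The first step is to rewrite the event on the right-hand side. Since $\epsilon,\delta>0$ and $v_H\ge 0$ by its definition, the inequality $\delta k_G\le\frac{1}{2}\epsilon\sqrt{v_H}$ is equivalent to $\sqrt{v_H}\ge 2\delta k_G/\epsilon$, hence, after squaring (legitimate because both sides are nonnegative), to $v_H\ge(2\delta k_G/\epsilon)^2$. Therefore
\begin{align*}
\mathbb{P}\left(\delta k_G\le\frac{1}{2}\epsilon\sqrt{v_H}\right)=\mathbb{P}\left(v_H\ge\left(\frac{2\delta k_G}{\epsilon}\right)^2\right).
\end{align*}

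The second step is to identify the law of $v_H$. Since $\{J_b\}_{b\in E}$ and $\{h_x\}_{x\in V}$ are mutually independent standard Gaussian random variables, $v_H$ is a sum of $|E|+|V|=k_G$ independent squares of standard Gaussians, so $v_H$ follows the chi-square distribution with $k_G$ degrees of freedom; in particular $\mathbb{P}(v_H\le x)=\gamma(k_G;x)$ for every $x\ge0$. Because this law is continuous, $\mathbb{P}(v_H<x)=\mathbb{P}(v_H\le x)$, and so
\begin{align*}
\mathbb{P}\left(v_H\ge\left(\frac{2\delta k_G}{\epsilon}\right)^2\right)=1-\gamma\left(k_G;\left(\frac{2\delta k_G}{\epsilon}\right)^2\right).
\end{align*}
Combining this with the two previous displays yields the asserted bound.

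I do not expect any genuine obstacle here: the substantive work (the bound $H(\tilde{\sigma}_G)-H(\sigma_G)\le 2\delta k_G$ and the bound $R_H\ge\sqrt{v_H}$) has already been done, and what remains is a short identification of a chi-square tail. The only points demanding a little care are bookkeeping ones: checking that squaring the inequality is valid, matching the degrees-of-freedom count $k_G$ with the number $|E|+|V|$ of independent Gaussian terms appearing in $v_H$, and using continuity of the chi-square law so that the passage between $\{v_H\ge x\}$ and the complement of $\{v_H\le x\}$ needs no boundary correction (here $x=(2\delta k_G/\epsilon)^2\ge0$, so $\gamma$ is being evaluated on its nontrivial range).
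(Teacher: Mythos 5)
Your argument is correct and is essentially the paper's own proof: it starts from the same displayed inequality $\mathbb{P}(H(\tilde{\sigma}_G)-H(\sigma_G)\le\epsilon R_H)\ge\mathbb{P}(\delta k_G\le\tfrac12\epsilon\sqrt{v_H})$, rewrites the event as a tail event for $v_H$, and identifies $v_H$ as chi-square with $k_G$ degrees of freedom. The only (harmless) addition is your explicit remark on continuity of the chi-square law, which the paper leaves implicit.
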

\begin{proof}
It follows from the above discussion that we have
\begin{align*}
\mathbb{P}\left(H(\tilde{\sigma}_G)-H(\sigma_G)\le \epsilon R_{H}\right) &\ge \mathbb{P}\left( \delta k_{G} \le \frac{1}{2} \epsilon \sqrt{v_{H}}\right)\\
&=\mathbb{P}\left( v_{H} \ge \left(\frac{2\delta k_G}{\epsilon} \right)^2\right)\\
&=1-\mathbb{P}\left( v_{H} < \left(\frac{2\delta k_G}{\epsilon} \right)^2\right).
\end{align*}
Since $\{J_b\}_{b \in E}$ and $\{h_x\}_{x \in V}$ are mutually independent random variables distributed according to a standard Gaussian distribution, then the random variable $v_{H}$ is distributed as the chi-square distribution with $k_{G}$ degrees of freedom.
Therefore,
\begin{align*}
\mathbb{P}\left( v_{H} < \left(\frac{2\delta k_G}{\epsilon} \right)^2\right)=\gamma\left(k_G; \left(\frac{2\delta k_G}{\epsilon}\right)^2\right).
\end{align*}
Thus, we obtain the lower bound of the target probability.
\end{proof}


\subsection{Stability of ground states: second approach}\label{sec:SA}

Before we proceed, let us point out the fundamental difference between the uniform approach and the current approach to solve question \ref{question2'}. Note that, 
if we use the same computations as considered in the proof of Theorem \ref{thm:order_preservation} in the particular case where $\tau = \tilde{\sigma}_G$ and $\sigma = \sigma_{G}$ and use the fact that
$H_\delta(\sigma_{G})\ge  H_\delta(\tilde{\sigma}_G)$, then it follows  that
\begin{equation}\label{eq:OPbound}
	H(\tilde{\sigma}_G) - H(\sigma_G) \le 2\delta(\lvert W_{\sigma_G,\tilde{\sigma}_G}\rvert+\lvert D_{\sigma_G,\tilde{\sigma}_G}\rvert).
\end{equation}
Recall that the proof of Theorem \ref{thmLinfty} fundamentally relied on the fact that, by using the $L^{\infty}$-distance estimates, the left-hand side of equation (\ref{eq:OPbound}) could be bounded above by $2\delta k_{G}$. Note that the right-hand side of equation (\ref{eq:OPbound}) is also bounded above by $2\delta k_{G}$, therefore, let us explore the geometry of the underlying graph 
$G$ in order to see whether it is possible to obtain better bounds.

The value of $\lvert W_{\sigma_G,\tilde{\sigma}_G}\rvert+\lvert D_{\sigma_G,\tilde{\sigma}_G}\rvert$ depends on the underlying graph structure and the relationship between the ground states $\sigma_G$ and $\tilde{\sigma}_G$. Therefore, we should check the value of $\lvert W_{\sigma_G,\tilde{\sigma}_G}\rvert+\lvert D_{\sigma_G,\tilde{\sigma}_G}\rvert$ for the intended problem.
In general, we look for a uniform estimation for the value of $\lvert W_{\sigma,\tau}\rvert+\lvert D_{\sigma,\tau}\rvert$ for any $\sigma$ and $\tau$ since the ground states $\sigma_G$ and $\tilde{\sigma}_G$ in practice are unknown.
First, let us show the following lemma.

\begin{lemma}\label{lem2}
For any two configurations $\sigma$ and $\tau$, we have
\begin{align*}
\lvert W_{\sigma,\tau}\rvert \leq (\deg G) \cdot \min \left\{\lvert D_{\sigma,\tau}\rvert, \lvert V\setminus D_{\sigma,\tau}\rvert\right\},
\end{align*}
where $\deg G$ stands for the maximum degree of $G$.
\end{lemma}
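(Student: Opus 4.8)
The claim is that $\lvert W_{\sigma,\tau}\rvert$, the number of edges whose link overlap is $-1$, is bounded by $(\deg G)$ times the smaller of $\lvert D_{\sigma,\tau}\rvert$ and $\lvert V\setminus D_{\sigma,\tau}\rvert$. The key structural observation is that an edge $b=\{x,y\}$ belongs to $W_{\sigma,\tau}$ precisely when exactly one of its endpoints lies in $D_{\sigma,\tau}$; indeed $\sigma_x\sigma_y\tau_x\tau_y = (\sigma_x\tau_x)(\sigma_y\tau_y)$, and this product is $-1$ iff exactly one of the factors $\sigma_x\tau_x$, $\sigma_y\tau_y$ equals $-1$, i.e. iff $\lvert\{x,y\}\cap D_{\sigma,\tau}\rvert = 1$. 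Hence $W_{\sigma,\tau}$ is exactly the edge boundary (cut set) of the vertex set $D_{\sigma,\tau}$ in $G$.

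Given that, the plan is a standard boundary-counting argument run twice. First I would note that every edge in $W_{\sigma,\tau}$ has at least one endpoint in $D_{\sigma,\tau}$; summing degrees over $D_{\sigma,\tau}$ counts each such edge at least once, so
\begin{align*}
\lvert W_{\sigma,\tau}\rvert \le \sum_{x\in D_{\sigma,\tau}} \deg(x) \le (\deg G)\,\lvert D_{\sigma,\tau}\rvert.
\end{align*}
Symmetrically, every edge in $W_{\sigma,\tau}$ also has at least one endpoint in $V\setminus D_{\sigma,\tau}$ (since it has exactly one endpoint in each), so the same computation with $D_{\sigma,\tau}$ replaced by its complement gives $\lvert W_{\sigma,\tau}\rvert \le (\deg G)\,\lvert V\setminus D_{\sigma,\tau}\rvert$. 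Taking the minimum of the two bounds yields the lemma.

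There is no real obstacle here; the only thing to be careful about is the combinatorial identity characterizing membership in $W_{\sigma,\tau}$ in terms of $D_{\sigma,\tau}$ — this is the crux, and once it is stated the rest is immediate. I would present the edge-boundary characterization as the first line of the proof, then give the two one-line degree-sum estimates and conclude. It may also be worth remarking (though not needed for the proof) that when $G$ is regular these bounds are essentially the isoperimetric-type statement that a cut is controlled by the size of the smaller side times the degree.
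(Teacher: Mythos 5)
Your proof is correct and follows essentially the same route as the paper: both identify $W_{\sigma,\tau}$ as exactly the set of edges with one endpoint in $D_{\sigma,\tau}$ and one in $V\setminus D_{\sigma,\tau}$, and then bound this cut by the maximum degree times the smaller side. Your explicit degree-sum step in fact makes the final counting estimate slightly more transparent than the paper's terse concluding line, but the underlying argument is identical.
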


\begin{proof}
Let us assume that $\lvert D_{\sigma,\tau}\rvert =s$, for some $s$ such that $0\leq s \leq \lvert V\rvert$.
Then, let us enumerate $D_{\sigma,\tau}$ as $D_{\sigma,\tau}=\{x_1, \dots , x_s\}\subset V$, where $x_i\in V$ for each $i=1,\dots, s$.
Moreover, we have $\lvert V\setminus D_{\sigma,\tau}\rvert =\lvert V\rvert -s$, and therefore we can write $V\setminus D_{\sigma,\tau}=\{y_1, \dots, y_{\lvert V\rvert -s}\}\subset V$, where $y_i\in V$ for each $i=1,\dots ,\lvert V\rvert-s$.
By the definition of $D_{\sigma,\tau}$, we have $\sigma_{x_i} \tau_{x_i} =-1$ for every $i=1, \dots, s$ and $\sigma_{y_j} \tau_{y_j}=1$ for all $j=1, \dots, \lvert V\rvert -s$.
If $\{x_i,x_j\}\in E$ for distinct $i$ and $j$ in $\{1, \dots, s\}$, then
\begin{align*}
\sigma_{x_i}\sigma_{x_j} \tau_{x_i} \tau_{x_j}  = (\sigma_{x_i} \tau_{x_i}) (\sigma_{x_j} \tau_{x_j})=(-1)^2=1.
\end{align*}
Thus, $\{x_i,x_j\}\notin W_{\sigma,\tau}$.
In a similar way, we conclude that in case $\{y_i,y_j\}\in E$ for distinct $i$ and $j$ in $\{1,\dots ,\lvert V\rvert -s\}$, it follows that $\{y_i,y_j\}\notin W_{\sigma,\tau}$.
If $\{x_i,y_j\}\in E$ for some $i\in\{1,\dots , s\}$ and some $j\in \{1,\cdots , \lvert V\rvert -s\}$, then 
\begin{align*}
\sigma_{x_i}\sigma_{y_j} \tau_{x_i}\tau_{y_j} = (\sigma_{x_i}\tau_{x_i}) (\sigma_{y_j} \tau_{y_j}) =(-1)\times 1=-1.
\end{align*}
Hence, $\{x_i,y_j\}\in W_{\sigma,\tau}$.
It follows that
\begin{align*}
W_{\sigma,\tau}=\{\{x,y\}\in E: x=x_i, y=y_j \text{ for some } i,j\}.
\end{align*}
Therefore, we have
\begin{align*}
\lvert W_{\sigma,\tau}\rvert\leq (\deg G) \min \{ \lvert D_{\sigma,\tau}\rvert ,\lvert V\setminus D_{\sigma,\tau}\rvert\}.
\end{align*}
\end{proof}

\begin{proposition}\label{prop:unif_W+D}
For any graph $G$, let $\sigma$ and $\tau$ be two spin configurations. Then, we have
\begin{align}
\lvert W_{\sigma,\tau}\rvert +\lvert D_{\sigma,\tau}\rvert\le \frac{(\deg G+1)\lvert V\rvert}{2}.
\end{align}
\end{proposition}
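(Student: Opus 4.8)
The plan is to feed the bound from Lemma \ref{lem2} into a one-variable optimization over the size of $D_{\sigma,\tau}$. I would set $n\coloneqq\lvert V\rvert$, $d\coloneqq\deg G$, and $s\coloneqq\lvert D_{\sigma,\tau}\rvert$, so that $0\le s\le n$ and $\lvert V\setminus D_{\sigma,\tau}\rvert=n-s$. Lemma \ref{lem2} gives $\lvert W_{\sigma,\tau}\rvert\le d\,\min\{s,n-s\}$, and therefore
\begin{align*}
\lvert W_{\sigma,\tau}\rvert+\lvert D_{\sigma,\tau}\rvert\le d\,\min\{s,n-s\}+s.
\end{align*}
It then suffices to verify that the right-hand side never exceeds $(d+1)n/2$, uniformly in $s\in\{0,1,\dots,n\}$.

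First I would treat the case $s\le n/2$: here $\min\{s,n-s\}=s$, so the right-hand side equals $(d+1)s\le(d+1)n/2$. Next I would treat the case $s\ge n/2$: here $\min\{s,n-s\}=n-s$, so the right-hand side equals $d(n-s)+s=dn-(d-1)s$, and since $d\ge1$ and $s\ge n/2$ one gets $dn-(d-1)s\le dn-(d-1)\tfrac n2=(d+1)\tfrac n2$. The two cases together give the asserted inequality for every $\sigma$ and $\tau$.

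There is essentially no hard step once Lemma \ref{lem2} is available: the remaining argument is a two-line case split according to whether $\lvert D_{\sigma,\tau}\rvert$ exceeds $\lvert V\rvert/2$. The only point that deserves a word of care is the use of $d=\deg G\ge1$ in the second case (which holds as soon as $E\neq\emptyset$); in the degenerate edgeless situation one has $W_{\sigma,\tau}=\emptyset$ while $D_{\sigma,\tau}$ may be all of $V$, so the stated bound genuinely relies on the presence of at least one edge. Finally, it is worth recording as a sanity check that the estimate is sharp: equality holds, for instance, for a balanced complete bipartite graph with $\sigma\equiv1$ and $\tau$ equal to $-1$ on one part and $+1$ on the other, in which case $\lvert D_{\sigma,\tau}\rvert=n/2$ and every edge lies in $W_{\sigma,\tau}$.
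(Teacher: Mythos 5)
Your argument is correct and is essentially the same as the paper's: apply Lemma \ref{lem2} and split into the two cases $\lvert D_{\sigma,\tau}\rvert\le\lvert V\rvert/2$ and $\lvert D_{\sigma,\tau}\rvert>\lvert V\rvert/2$, with identical algebra in each case. Your side remarks are fine extras --- the implicit use of $\deg G\ge 1$ in the second case is also present (unstated) in the paper's proof, and your complete bipartite example correctly shows the bound is sharp.
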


\begin{proof}
Using Lemma \ref{lem2}, if $\lvert D_{\sigma,\tau}\rvert\leq \lvert V\rvert/2$ then
\begin{align*}
\lvert D_{\sigma,\tau}\rvert +\lvert W_{\sigma,\tau}\rvert\leq (\deg G+1)\lvert D_{\sigma,\tau}\rvert\leq \frac{\deg G+1}{2}\lvert V\rvert,
\end{align*}
otherwise, if $\lvert D_{\sigma,\tau}\rvert > \lvert V\rvert/2$, it follows that
\begin{align*}
\lvert D_{\sigma,\tau}\rvert +\lvert W_{\sigma,\tau}\rvert &\leq (\deg G) \lvert V\setminus D_{\sigma,\tau}\rvert +\lvert D_{\sigma,\tau}\rvert\\
&=(\deg G)\lvert V\rvert -(\deg G-1)\lvert D_{\sigma,\tau}\rvert\\
&\leq (\deg G)\lvert V\rvert -\frac{\deg G-1}{2}\lvert V\rvert\\
&=\frac{\deg G+1}{2}\lvert V\rvert.
\end{align*}
\end{proof}

Thus, we have the following theorem which is another answer for the question \ref{question2'} (see Theorem \ref{thmLinfty} for an alternative approach to the question \ref{question2'}).

\begin{theorem}\label{thmop}
Let $\{J_b\}_{b \in E}$ and $\{h_x\}_{x \in V}$ be mutually independent standard Gaussian random variables. Then, we have 
\begin{equation}
\mathbb{P}\left(H(\tilde{\sigma}_G)-H(\sigma_G)\le \epsilon R_{H}\right)\ge 1-\gamma\left( k_G; \left(\frac{\delta\lvert V\rvert(\deg G+1)}{\epsilon}\right)^2\right).
\end{equation}
\end{theorem}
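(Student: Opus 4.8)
The plan is to run exactly the same scheme as in Section \ref{sec:FA}, but replacing the $L^\infty$-bound $2\delta k_G$ by the sharper combinatorial bound coming from Proposition \ref{prop:unif_W+D}. First I would recall equation (\ref{eq:OPbound}), which was obtained by repeating the computation in the proof of Theorem \ref{thm:order_preservation} with $\tau=\tilde\sigma_G$, $\sigma=\sigma_G$ and using $H_\delta(\sigma_G)\ge H_\delta(\tilde\sigma_G)$:
\begin{align*}
H(\tilde\sigma_G)-H(\sigma_G)\le 2\delta\bigl(\lvert W_{\sigma_G,\tilde\sigma_G}\rvert+\lvert D_{\sigma_G,\tilde\sigma_G}\rvert\bigr).
\end{align*}
Since the ground states are not known a priori, the point is that Proposition \ref{prop:unif_W+D} bounds $\lvert W_{\sigma,\tau}\rvert+\lvert D_{\sigma,\tau}\rvert$ \emph{uniformly} over all pairs $(\sigma,\tau)$ by $(\deg G+1)\lvert V\rvert/2$; applying it to $(\sigma_G,\tilde\sigma_G)$ gives the deterministic estimate $H(\tilde\sigma_G)-H(\sigma_G)\le\delta(\deg G+1)\lvert V\rvert$, valid for every realization of the couplings and fields.

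Next I would pass to probabilities and use Lemma \ref{lem:total margin}. From the deterministic bound,
\begin{align*}
\mathbb{P}\bigl(H(\tilde\sigma_G)-H(\sigma_G)\le\epsilon R_H\bigr)
\ge\mathbb{P}\bigl(\delta(\deg G+1)\lvert V\rvert\le\epsilon R_H\bigr)
\ge\mathbb{P}\Bigl(\delta(\deg G+1)\lvert V\rvert\le\epsilon\sqrt{v_H}\Bigr),
\end{align*}
where the last step uses $R_H\ge\sqrt{v_H}$. Squaring and rearranging, the right-hand side equals
\begin{align*}
\mathbb{P}\Bigl(v_H\ge\bigl(\tfrac{\delta\lvert V\rvert(\deg G+1)}{\epsilon}\bigr)^2\Bigr)
=1-\mathbb{P}\Bigl(v_H<\bigl(\tfrac{\delta\lvert V\rvert(\deg G+1)}{\epsilon}\bigr)^2\Bigr).
\end{align*}

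Finally, as in the proof of Theorem \ref{thmLinfty}, I would invoke that $v_H=\sum_bJ_b^2+\sum_xh_x^2$ is a sum of $k_G=\lvert E\rvert+\lvert V\rvert$ squares of mutually independent standard Gaussians, hence is chi-square distributed with $k_G$ degrees of freedom; therefore the subtracted probability equals $\gamma\bigl(k_G;(\delta\lvert V\rvert(\deg G+1)/\epsilon)^2\bigr)$, which yields the claimed lower bound. I do not expect a genuine obstacle here: every ingredient (the order-preservation computation, Proposition \ref{prop:unif_W+D}, Lemma \ref{lem:total margin}, and the chi-square identification) is already established, and the argument is essentially bookkeeping. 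The only point requiring a little care is ensuring the combinatorial bound is applied in its uniform form, so that no knowledge of the (random) ground states is needed.
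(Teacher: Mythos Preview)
Your proposal is correct and follows essentially the same route as the paper's own proof: both combine the bound \eqref{eq:OPbound} with Proposition~\ref{prop:unif_W+D}, Lemma~\ref{lem:total margin}, and the chi-square identification of $v_H$. The only cosmetic difference is the order of operations---you apply the uniform combinatorial bound first and then pass to probabilities, whereas the paper keeps $|W_{\sigma_G,\tilde\sigma_G}|+|D_{\sigma_G,\tilde\sigma_G}|$ inside the probability for one extra line before bounding it---but the logic and ingredients are identical.
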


\begin{proof}
Analogously as in the proof of Theorem \ref{thmLinfty}, it follows from equation (\ref{eq:OPbound}), $R_{H}\ge \sqrt{v_H}$ and Proposition \ref{prop:unif_W+D} that 
\begin{align*}
\mathbb{P}\left(H(\tilde{\sigma}_G)-H(\sigma_G)\le \epsilon R_{H}\right){}
&\ge \mathbb{P}\left(\delta\le \frac{\epsilon \sqrt{v_{H}}}{2(\lvert W_{\sigma_G,\tilde{\sigma}_G}\rvert +\lvert D_{\sigma_G,\tilde{\sigma}_G}\rvert )} \right)\\
&=\mathbb{P} \left(  v_{H}\ge \left(\frac{2\delta(\lvert W_{\sigma_G,\tilde{\sigma}_G}\rvert +\lvert D_{\sigma_G,\tilde{\sigma}_G}\rvert)}{\epsilon}\right)^2\right)\\
&\ge \mathbb{P}\left( v_{H} \ge \left(\frac{\delta\lvert V\rvert(\deg G+1)}{\epsilon}\right)^2\right)\\
&=1-\gamma\left( k_G; \left(\frac{\delta\lvert V\rvert(\deg G+1)}{\epsilon}\right)^2\right),
\end{align*}
where we used the fact that $v_{H}$ is distributed  according to a chi-square distribution with $k_{G}$ degrees of freedom.
\end{proof}




\subsection{Comparison between approaches}

In the rest of this section, we compare the methods presented in Sections \ref{sec:FA} and \ref{sec:SA}
passing through several examples to which we apply Proposition \ref{prop:unif_W+D}.

The first example is the case where we consider complete graphs including the SK model.
If we consider complete graphs, then Theorem \ref{thmop} provides us with better results if compared to Theorem \ref{thmLinfty}.

\begin{example}\label{ex:completegraph}
If $G$ is a complete graph (that is, all vertices are connected to each other) with $N$ vertices, then we have
\begin{align*}
\frac{\deg G+1}{2}\lvert V\rvert=\frac{N^2}{2}.
\end{align*}
On the other hand, the value of $k_{G}$ will be given by
\begin{align*}
k_{G} :=\lvert E\rvert+\lvert V\rvert =\frac{N(N-1)}{2}+N=\frac{N(N+1)}{2}.
\end{align*}
Therefore, 
\begin{align*}
\frac{\deg G+1}{2}\lvert V\rvert< k_{G}.
\end{align*}
Hence the uniform upper bound for $\lvert W_{\sigma,\tau}\rvert+\lvert D_{\sigma,\tau}\rvert$ we obtained in Proposition \ref{prop:unif_W+D} is always better than $k_G$.
Furthermore, we can calculate the explicit value of $\lvert W_{\sigma,\tau}\rvert+\lvert D_{\sigma,\tau}\rvert$ when $G$ is a complete graph.
From the proof of Lemma \ref{lem2}, by assuming that $G$ is a complete graph, 
we can say that $\lvert W_{\sigma,\tau}\rvert =\lvert D_{\sigma,\tau}\rvert (\lvert V\rvert -\lvert D_{\sigma,\tau}\rvert)$. Therefore,
\begin{align*}
\lvert W_{\sigma,\tau}\rvert +\lvert D_{\sigma,\tau}\rvert = \lvert D_{\sigma,\tau}\rvert (N + 1 - \lvert D_{\sigma,\tau}\rvert)
\le \frac{(N+1)^2}{4},
\end{align*}
and the proof of Theorem \ref{thmop} implies
\begin{align*}
\mathbb{P}\left(H(\tilde{\sigma}_G)-H(\sigma_G)\le \epsilon R_{H}\right)
&\geq \mathbb{P} \left(  v_{H}\ge \left(\frac{2\delta(\lvert W_{\sigma_G,\tilde{\sigma}_G}\rvert +\lvert D_{\sigma_G,\tilde{\sigma}_G}\rvert)}{\epsilon}\right)^2\right)\\
&\ge \mP\left( v_{H}\ge\frac{\delta^{2} (N+1)^4}{4\vep^2} \right)\\
&= 1-\gamma\left( k_G; \frac{\delta^{2} (N+1)^4}{4\vep^2}\right).
\end{align*}
\end{example}

The following example considers King's graphs and Theorem \ref{thmop} works better than Theorem \ref{thmLinfty} as well as the above example.

\begin{example}
Let $G$ be an $N\times M$ King's graph.
The $N\times M$ King's graph can be visualized as an $N\times M$ chessboard where each of its squares corresponds to a vertex of the graph, and each edge represents
a legal move of a king in a chess game. In that way, the inner vertices of the graph have $8$ neighbors each, while the vertices in the corners have $3$ neighbors each, and each of the remaining vertices on the
sides of the graph has $5$ neighbors. 
For an $N\times M$ King's graph, we have
\begin{align*}
\frac{\deg G+1}{2}\lvert V\rvert=\frac{9}{2}MN,
\end{align*}
since $\deg G=8$. Moreover, we have
\begin{align*}
k_{G} = \lvert E\rvert+\lvert V\rvert=5MN-3(M+N)+2.
\end{align*}
If $M$ and $N$ are sufficiently large, then we have
\begin{align*}
\frac{\deg G+1}{2}\lvert V\rvert< k_{G}.
\end{align*}
\end{example}

In the following example, differently from the previous ones, we can see that the estimate provided by Theorem \ref{thmLinfty} suits better than that of Theorem \ref{thmop}.

\begin{example}
If $G$ is a star graph with degree $k\ge 3$, that is, $G$ consists of one vertex placed in the center and other $k$ vertices connected only with the center, then 
\begin{align*}
\frac{\deg G+1}{2}\lvert V\rvert=\frac{(k+1)^2}{2}.
\end{align*}
Furthermore, we have
\begin{align*}
k_{G} = \lvert E\rvert+\lvert V\rvert=2k+1.
\end{align*}
Therefore, we obtain
\begin{align*}
\frac{\deg G+1}{2}\lvert V\rvert> k_{G}.
\end{align*}
\end{example}

\begin{figure}[H]
        \centering
        \begin{subfigure}[b]{0.44\textwidth}
                \includegraphics[width=\textwidth]{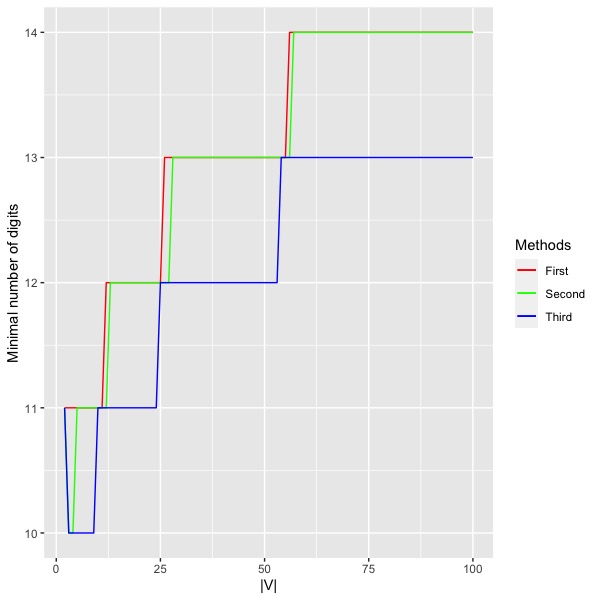}
                \caption{Complete graph.}
                \label{fig:dig1}
        \end{subfigure}       
        \begin{subfigure}[b]{0.44\textwidth}
                \includegraphics[width=\textwidth]{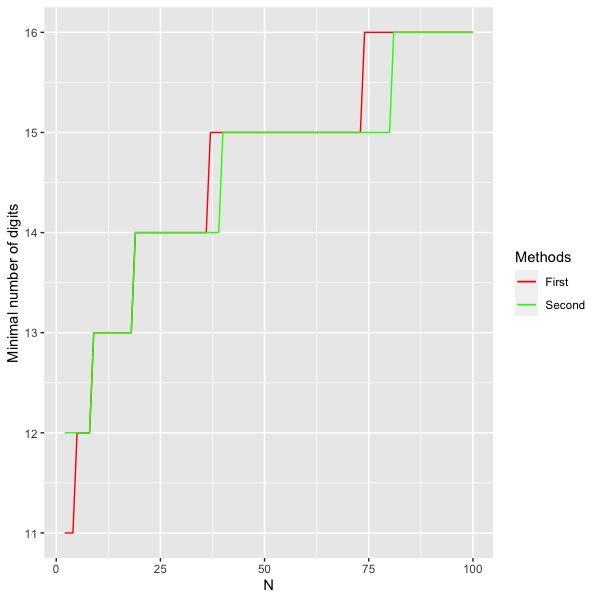}
                \caption{$N \times N$ King's graph.}
                \label{fig:dig2}
        \end{subfigure}
        \begin{subfigure}[b]{0.44\textwidth}
                \includegraphics[width=\textwidth]{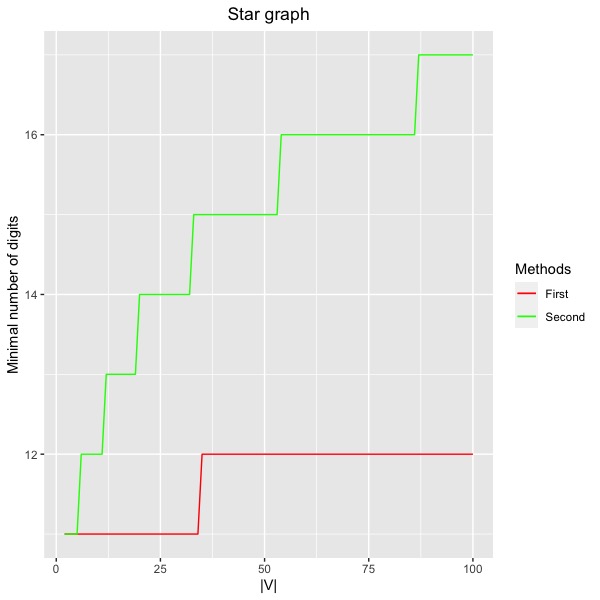}
                \caption{Star graph.}
        \end{subfigure}       
    \caption{Minimal number of digits to be considered in the binary expansions of the parameters so that with probability higher than $99\%$ the difference
    $H(\tilde{\sigma}_G) - H(\sigma_G)$ represents a value smaller than $1\%$ of $R_H$, as a function of the size of the graph.}
    \label{fig:digit}
\end{figure}
According to the above examples, we conclude that it is not always possible to guarantee that the uniform upper bound of $\lvert W_{\sigma,\tau}\rvert +\lvert D_{\sigma,\tau}\rvert$ provided by Proposition \ref{prop:unif_W+D} works better than $k_G=\lvert E\rvert +\lvert V\rvert$.
Thus, we may have to consider such bounds separately when considering different graphs in order to obtain an optimal estimate for the probability that inequality $H(\tilde{\sigma}_G)-H(\sigma_G)<\epsilon R_{H}$ holds.  

Let us consider again the problem of stability where we take into account only a finite number of terms in the binary expansions of the parameters $(J_b)_{b \in E}$ and $(h_x)_{x \in V}$ as we illustrated in the beginning of Section \ref{sec:settings}. In Figure \ref{fig:digit}, corresponding to the sizes  of different graphs, we show the minimum number of digits necessary to be considered in the binary expansions of such parameters such that with probability at least $99\%$ the difference $H(\tilde{\sigma}_G) - H(\sigma_G)$ represents a value smaller than $1\%$ of $R_H$. On each plot we compare the different methods developed in this paper, where the first method corresponds to the estimate from 
Theorem \ref{thmLinfty} and  the second method corresponds to the estimate from Theorem \ref{thmop}. In Figure \ref{fig:dig1}, we also included a third estimate from Example \ref{ex:completegraph} which is sharper and gives us better results when compared to the other methods. As we expected, the second method provides us with  better results when compared to the first one for complete graphs and for $N \times N$ King's graphs when $N$ is sufficiently large. On the other hand, for star graphs the first method is more appropriate, moreover, a certain  discrepancy of performance is easily observed. 


\section{Stability under a perturbed graph}\label{pertgraph}

In this section, we consider the stability of energy landscape when a given spin system defined on a graph is compressed into a smaller subsystem.
Differently from the previous sections, we fix a given Hamiltonian and we assume a sufficient condition that guarantees the existence of a nontrivial subset of the entire vertex set outside of which we can randomly assign any spin configuration and the energy of the system is kept under control up to a certain error margin. 

Let $G=(V,E)$ be a finite simple graph, and let $H$ be the Hamiltonian on $G$ given by
\begin{align*}
H(\sigma) = -\sum_{ \{x,y\} \in E} J_{x,y} \sigma_x\sigma_{y}
\end{align*}
for every configuration $\sigma\in\{-1,1\}^V$, where $\{J_{x,y}\}_{\{x,y\} \in E}$ is a collection of mutually independent standard Gaussian random variables.
What we would like to show is that we can compress the whole system into a nontrivial subsystem so that the energy landscape of such subsystem is close to the original one up to a given error margin. More precisely, our goal is to find a class of examples for which given a positive constant $\epsilon$, there is a positive $\delta$ such that the subsystem $V_0 = V_0(\delta)$ of $V$, defined from the relation
\begin{align*}
V\setminus V_0\coloneqq\left\{x\in V: \text{$\lvert J_{x,y}\rvert<\delta$ holds for every $y$ such that $\{x,y\} \in E$}\right\},
\end{align*}
is non-trivial, has size  comparable to the size of $V$, and satisfies
\begin{equation}\label{cond:ignore}
	\sup_{\sigma,\eta\in\{-1,1\}^V} \left\lvert H(\sigma)-H(\sigma_{V_0}, \eta_{V\setminus V_0}) \right\rvert < \vep R_H
\end{equation}
with high probability, see Figure \ref{fig:V0}.

\begin{figure}[h]
\centering
\includegraphics[scale=.2]{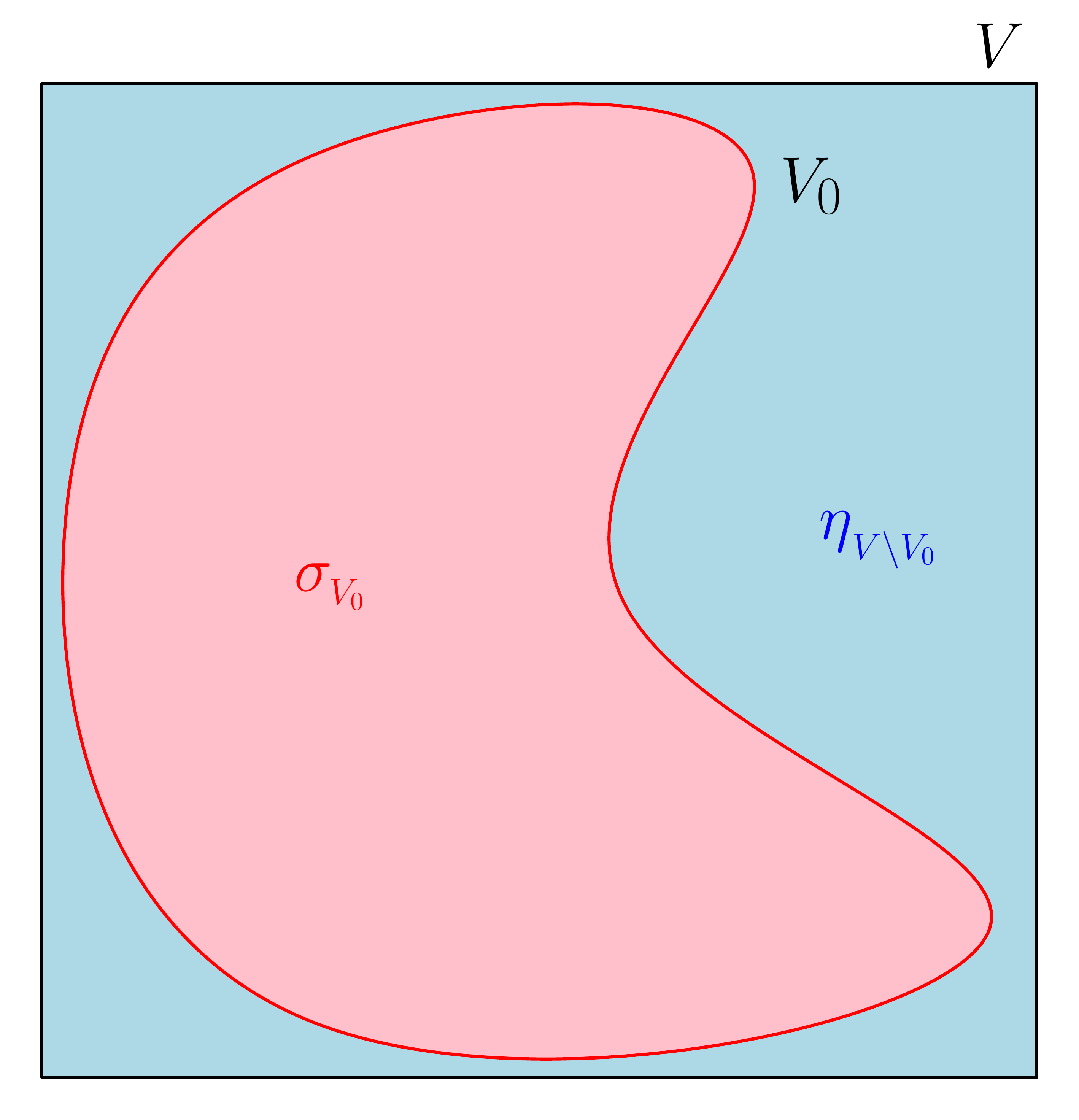}
\caption{We want to approximate the energy of a configuration $\sigma$ defined in the whole vertex set $V$ by the energy of a configuration that coincides with $\sigma$ in $V_{0}$ and
whose spins $\eta_{i}$'s in the set $V\setminus V_0$ are arbitrary.}\label{fig:V0}
\end{figure}

\subsection{One-dimensional discrete torus}\label{sec:1dtorus}

Let us solve the problem stated above in the particular case where the graph $G$ is a one-dimensional discrete torus.
\begin{theorem}\label{thm4}
Let $G=(V,E)$ be a one-dimensional discrete torus with $N$ vertices, that is, $V=\{1,2,\dots,N\}$ and $E=\{\{1,2\},\{2,3\},\dots,\{N-1,N\},\{N,1\}\}$. Given $\epsilon > 0$, let $\delta$ be a positive number such that 
$\delta < \epsilon/\sqrt{2\pi}$. Then, if $A$ is a subset of the event $\{0 < |V \backslash V_{0}| < N\}$, it follows that
\begin{equation}\label{thm4a}
\mP\left(\Bigg\{\sup_{\sigma,\tau\in\{-1,1\}^N} \left\lvert H(\sigma)-H(\sigma_{V_0}, \tau_{V\setminus V_0}) \right\rvert < \vep R_H \Bigg\}\cap A \right)
\geq \mP(A) - \frac{1 - \frac{2}{\pi}}{N \left(\sqrt{\frac{2}{\pi}} - \frac{2\delta}{\epsilon}\right)^{2}}
\end{equation}
holds for each $N \geq 3$. In particular, given constants $C > 0$ and $\alpha \in {[}0,1)$, we have
\begin{align}\label{thm4b}
\mP\left(\Bigg\{\sup_{\sigma,\tau\in\{-1,1\}^N} \left\lvert H(\sigma)-H(\sigma_{V_0}, \tau_{V\setminus V_0}) \right\rvert < \vep R_H \Bigg\} \cap \Big\{ C N^{\alpha} \leq |V \backslash V_{0}| < N \Big\}\right) &\nonumber \\
\geq \left( 1-\frac{C}{N^{1-\alpha}\theta^2} \right)^{2}\frac{1}{1+\frac{1+2\theta-3\theta^2}{N\theta^2}} - \theta^{N}  -& \frac{1 - \frac{2}{\pi}}{N \left(\sqrt{\frac{2}{\pi}} - \frac{2\delta}{\epsilon}\right)^{2}}
\end{align}
for $N$ sufficiently large, where 
\begin{equation}
\theta=\int_{-\delta}^{\delta}\frac{e^{-\xi^2/2}}{\sqrt{2\pi}}d\xi.
\end{equation}
\end{theorem}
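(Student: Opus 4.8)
The plan is to derive both \eqref{thm4a} and \eqref{thm4b} from one deterministic inequality together with a second moment computation for the size of $V\setminus V_0$. Throughout, write $J_b=J_{x,y}$ for $b=\{x,y\}\in E$ and set
\begin{equation*}
\Delta\coloneqq\sup_{\sigma,\tau\in\{-1,1\}^N}\bigl\lvert H(\sigma)-H(\sigma_{V_0},\tau_{V\setminus V_0})\bigr\rvert .
\end{equation*}
The first, purely deterministic, observation is that for any $\sigma,\tau$ the quantity $H(\sigma)-H(\sigma_{V_0},\tau_{V\setminus V_0})$ only involves edges incident to $V\setminus V_0$, and by the very definition of $V\setminus V_0$ every such edge $b$ satisfies $\lvert J_b\rvert<\delta$; since the associated spin‑product differences are at most $2$ in absolute value, $\Delta\le 2\sum_{b\in E'}\lvert J_b\rvert<2\delta N$, where $E'$ is the set of edges incident to $V\setminus V_0$ and $\lvert E'\rvert\le\lvert E\rvert=N$. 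The second deterministic ingredient, and the heart of the matter, is the lower bound $R_H\ge\sum_{b\in E}\lvert J_b\rvert$. I would prove it from $R_H=\max_\sigma H(\sigma)-\min_\sigma H(\sigma)$: on the cycle the attainable families $\bigl(\sigma_x\sigma_y\bigr)_{b=\{x,y\}\in E}$ are exactly those whose product over all edges equals $1$, so choosing $\sigma_x\sigma_y=-\mathrm{sgn}(J_b)$ for the maximum and $+\mathrm{sgn}(J_b)$ for the minimum, up to flipping the single edge with smallest $\lvert J_b\rvert$ to respect this parity constraint, yields $\max_\sigma H(\sigma)\ge\sum_b\lvert J_b\rvert-2\min_b\lvert J_b\rvert$ and $\min_\sigma H(\sigma)\le-\sum_b\lvert J_b\rvert+2\min_b\lvert J_b\rvert$. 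A short case analysis on the parity of $N$ and on $\prod_b\mathrm{sgn}(J_b)$ shows that $R_H$ equals $2\sum_b\lvert J_b\rvert$, or $2\sum_b\lvert J_b\rvert-2\min_b\lvert J_b\rvert$ when $N$ is odd, or $2\sum_b\lvert J_b\rvert-4\min_b\lvert J_b\rvert$ when $N$ is even; since $\sum_b\lvert J_b\rvert\ge N\min_b\lvert J_b\rvert$ and $N\ge3$ (so that the $4\min$ case has $N\ge4$), in every case $R_H\ge\sum_{b\in E}\lvert J_b\rvert$.

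To obtain \eqref{thm4a}, combine the two bounds: on the event $\{\tfrac1N\sum_{b\in E}\lvert J_b\rvert\ge\tfrac{2\delta}{\epsilon}\}$ one has $\epsilon R_H\ge\epsilon\sum_b\lvert J_b\rvert\ge 2\delta N>\Delta$, hence $\{\Delta\ge\epsilon R_H\}\subseteq\{\tfrac1N\sum_{b\in E}\lvert J_b\rvert<\tfrac{2\delta}{\epsilon}\}$, so for every event $A$,
\begin{equation*}
\mP\bigl(\{\Delta<\epsilon R_H\}\cap A\bigr)=\mP(A)-\mP\bigl(\{\Delta\ge\epsilon R_H\}\cap A\bigr)\ \ge\ \mP(A)-\mP\!\left(\frac1N\sum_{b\in E}\lvert J_b\rvert<\frac{2\delta}{\epsilon}\right).
\end{equation*}
Since $\{\lvert J_b\rvert\}_{b\in E}$ are i.i.d.\ with $\mE\lvert J_b\rvert=\sqrt{2/\pi}$ and $\mathrm{Var}(\lvert J_b\rvert)=1-\tfrac{2}{\pi}$, and the hypothesis $\delta<\epsilon/\sqrt{2\pi}$ is precisely $\tfrac{2\delta}{\epsilon}<\sqrt{2/\pi}$, Chebyshev's inequality applied to the sample mean $\frac1N\sum_{b\in E}\lvert J_b\rvert$ bounds the last probability by $\dfrac{1-2/\pi}{N\bigl(\sqrt{2/\pi}-2\delta/\epsilon\bigr)^2}$, which is \eqref{thm4a}.

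For \eqref{thm4b}, apply \eqref{thm4a} with $A=\{CN^\alpha\le\lvert V\setminus V_0\rvert<N\}$ (contained in $\{0<\lvert V\setminus V_0\rvert<N\}$ for $N$ large), so that it remains to bound $\mP(A)$ from below. Write $Z=\lvert V\setminus V_0\rvert=\sum_{i=1}^N B_iB_{i+1}$ with indices modulo $N$, where $B_i=\mathbf{1}\{\lvert J_{i,i+1}\rvert<\delta\}$ are i.i.d.\ Bernoulli$(\theta)$ indexed by the $N$ edges of the cycle; a direct moment computation — the only correlated pairs among the summands being $B_iB_{i+1}$ with $B_{i-1}B_i$ and with $B_{i+1}B_{i+2}$ — gives $\mE Z=N\theta^2$ and $\mathrm{Var}(Z)=N\theta^2(1+2\theta-3\theta^2)$, the quantity that appears in \eqref{thm4b}. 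For $N$ large one has $CN^\alpha<\mE Z$, so the one‑sided Chebyshev (Cantelli) inequality gives
\begin{equation*}
\mP(Z\ge CN^\alpha)\ \ge\ \frac{(\mE Z-CN^\alpha)^2}{\mathrm{Var}(Z)+(\mE Z-CN^\alpha)^2}\ =\ \frac{(1-u)^2}{(1-u)^2+v}\ \ge\ (1-u)^2\,\frac{1}{1+v},
\end{equation*}
where $u=\dfrac{C}{N^{1-\alpha}\theta^2}$, $v=\dfrac{1+2\theta-3\theta^2}{N\theta^2}$, and the last inequality holds because $(1-u)^2\le1$. Since $\mP(Z=N)=\theta^N$ and $\{CN^\alpha\le Z<N\}\supseteq\{Z\ge CN^\alpha\}\setminus\{Z=N\}$, we obtain $\mP(A)\ge(1-u)^2/(1+v)-\theta^N$; inserting this into the lower bound furnished by \eqref{thm4a} gives exactly \eqref{thm4b}.

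The step I expect to be the main obstacle is the deterministic estimate $R_H\ge\sum_{b\in E}\lvert J_b\rvert$. The generic bound $R_H\ge\sqrt{v_H}$ of Lemma \ref{lem:total margin} is only of order $\sqrt N$ and is far too weak here; one genuinely needs a bound that is linear in $N$ and carries the \emph{sharp} constant $1$ in front of $\sum_b\lvert J_b\rvert$ (any smaller constant would worsen the threshold $2\delta/\epsilon$ and hence change the correction term in \eqref{thm4a}). Extracting that sharp constant forces the analysis of the attainable domain‑wall configurations on the cycle and the small case distinction on the parity of $N$, which is also what makes the restriction $N\ge3$ necessary. Once this lower bound is available, the remaining arguments are the routine Chebyshev and Cantelli estimates above, the only nontrivial computational input being the variance of $\lvert V\setminus V_0\rvert$.
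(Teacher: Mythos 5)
Your proposal is correct, and its skeleton coincides with the paper's: a deterministic bound on $\sup_{\sigma,\tau}\lvert H(\sigma)-H(\sigma_{V_0},\tau_{V\setminus V_0})\rvert$ of order $\delta N$, a lower bound on $R_H$ linear in $N$ coming from the cycle structure, Chebyshev for the sample mean of the folded Gaussians (the hypothesis $\delta<\epsilon/\sqrt{2\pi}$ being exactly $2\delta/\epsilon<\sqrt{2/\pi}$), and a second-moment bound for $\lvert V\setminus V_0\rvert$ identical to Lemma \ref{lem42}. Where you genuinely deviate is in how the two deterministic ingredients are paired. The paper uses the cruder Lemma \ref{lem41} bound $R_H\ge 2\sum_b\lvert J_b\rvert-4\min_b\lvert J_b\rvert$ together with the finer estimate $\Delta\le 4\delta\lvert V\setminus V_0\rvert$ from \eqref{Hamiltoniandiff}, and then exploits the event $A$ itself twice: on $A$ the set $V\setminus V_0$ is nonempty, so $\min_b\lvert J_b\rvert\le\delta$, and $\lvert V\setminus V_0\rvert<N$, which after conditioning on the sample-mean event yields the same threshold $2\delta/\epsilon$. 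You instead sharpen the $R_H$ estimate to the exact parity/frustration computation on the cycle, obtaining $R_H\ge\sum_b\lvert J_b\rvert$ unconditionally for $N\ge3$ (your case analysis is correct, and the $N=3$ case is precisely why the odd-$N$ refinement $2\sum-2\min$ is needed), and pair it with the cruder $\Delta<2\delta N$; this decouples the energy comparison from $A$ entirely and replaces the paper's conditioning on the auxiliary event $B$ by a clean event inclusion $\{\Delta\ge\epsilon R_H\}\subseteq\{\frac1N\sum_b\lvert J_b\rvert<2\delta/\epsilon\}$ (incidentally sidestepping the paper's implicit use of $\lvert V\setminus V_0\rvert+\epsilon\le N$ in that step). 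Finally, for \eqref{thm4b} you use Cantelli plus the relaxation $(1-u)^2+v\le 1+v$ where the paper invokes Paley--Zygmund; both land on the identical expression, so nothing is lost or gained there. The net effect is the same quantitative bounds, with your route paying for its cleaner probabilistic bookkeeping by a slightly heavier combinatorial lemma on $R_H$.
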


Before we follow to the proof of the result above, let us clarify the theorem by providing the reader with some practical results. Let us consider the particular case where $C = 1$ and $\alpha \in (0,1)$. Corresponding to different values of $\epsilon$ and $\delta$, we obtain lower bounds for the probability that the size of $V \backslash V_0$ is at least $N ^ \alpha$ and condition (\ref{cond:ignore}) holds, see the table below.

\begin{table}[h!]
\centering
\begin{tabular}{ |p{1.5cm}||p{1.5cm}|p{1.5cm}|p{1.5cm}|p{2cm}|p{3.5cm}|}
 \hline
 \multicolumn{6}{|c|}{Examples} \\
 \hline
$N$ & $\epsilon$ & $\delta $ & $\alpha$& Minimum size of  $V \backslash V_0$  & Right-hand side of (\ref{thm4b})\\
 \hline
    $10^8$     & $0.05$ & $0.0198$  & $0.4$ & $1584$ & $0.877$ \\
  $10^8$     & $0.05$ & $0.0198$  & $0.5$ & $10^4$ & $0.361$ \\
 $10^8$     & $0.1$ & $0.0398$  & $0.5$ & $10^4$ & $0.810$ \\
 $10^{12}$ & $0.01$ &   $0.00398$  & $0.5$ & $10^ 6$ & $0.811$\\
 $10^{12}$  & $0.05 $&  $0.0199$ & $0.5$ &  $10^ 6$ & $0.992$\\
 $10^{12}$  & $0.1$ &  $0.0399$ & $0.5$&  $10^ 6$ & $0.998$\\
 $10^{12}$  & $0.05$ &  $0.0199$ & $0.6$&  $\approx 1.58 \times 10^7$ & $0.879$\\
 $10^{12}$  & $0.05$ &  $0.0199$ & $0.65$&  $\approx 6.31 \times 10^7 $ & $0.563$\\
  \hline
\end{tabular}
\caption{Applications of Theorem \ref{thm4}.}
\end{table}

Let us observe that for any pair $\sigma, \tau$ of spin configurations, we have
\begin{eqnarray*}
	|H(\sigma) - H(\sigma_{V_{0}}, \tau_{V \backslash V_{0}})| &=& \left|\sum_{x \in V_{0}} \sum_{\substack{y \in V \backslash V_{0} \\ \{x,y\} \in E}} J_{x,y} \sigma_{x}(\sigma_{y} - \tau_{y}) + 
	\sum_{\substack{\{x,y\} \subseteq V \backslash V_{0} \\ \{x,y\} \in E}} J_{x,y} (\sigma_{x} \sigma_{y} - \tau_{x} \tau_{y})\right| \\
	&=& \left|\sum_{x \in V_{0}} \sum_{\substack{y \in V \backslash V_{0} \\ \{x,y\} \in E}} J_{x,y} \sigma_{x}\sigma_{y}(1 - \sigma_{y}\tau_{y})  + 
	\sum_{\substack{\{x,y\} \subseteq V \backslash V_{0} \\ \{x,y\} \in E}} J_{x,y} \sigma_{x} \sigma_{y} (1 - \sigma_{x}\tau_{x} \sigma_{y}\tau_{y}) \right| 
\end{eqnarray*}
\begin{eqnarray*}
		&=& \left|\sum_{y \in V \backslash V_{0}} \sum_{\substack{x \in V \\ \{x,y\} \in E}} J_{x,y} \sigma_{x}\sigma_{y}\left[\mathbbm{1}_{x \in V_{0}} (1 - \sigma_{y}\tau_{y})  +
	\mathbbm{1}_{x \in V \backslash V_{0}} (1 - \sigma_{x}\tau_{x} \sigma_{y}\tau_{y})/2\right] \right|\\
	&\leq& 2 \sum_{y \in V \backslash V_{0}} \sum_{\substack{x \in V \\ \{x,y\} \in E}} |J_{x,y}| \leq 2\delta \sum_{y \in V \backslash V_{0}} \text{deg}(y).
\end{eqnarray*}
In particular, if $G$ is the one-dimensional torus as in Theorem \ref{thm4}, it follows that 
\begin{equation}\label{Hamiltoniandiff}
	\sup_{\sigma,\tau\in\{-1,1\}^N} \left\lvert H(\sigma)-H(\sigma_{V_0}, \tau_{V\setminus V_0}) \right\rvert \leq 4 \delta |V \backslash V_{0}|.
\end{equation}
Now, let us prepare two lemmas in order to prove Theorem \ref{thm4}.

\begin{lemma}\label{lem41}
For $R_H=\sup_{\xi,\eta}\lvert H(\xi)-H(\eta)\rvert$, we have
\begin{align}
2\sum_{x=1}^N\lvert J_{x,x+1}\rvert - 4\min_{x = 1,\dots, N}\lvert J_{x,x+1}\rvert
\le R_H
\le 2\sum_{x=1}^N\lvert J_{x,x+1}\rvert,
\end{align}
hence, with probability 1,
\begin{align*}
R_H \sim 2\sqrt{\frac{2}{\pi}}N \quad\text{as $N$ approaches infinity}.
\end{align*}
\end{lemma}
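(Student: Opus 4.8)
The plan is to establish the lemma in three stages: first the upper bound, then the lower bound, and finally the almost-sure asymptotic.

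\medskip

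\textbf{Upper bound.} For any pair $\xi,\eta\in\{-1,1\}^N$, I would expand the difference $H(\xi)-H(\eta)$ as a sum over edges $\{x,x+1\}$ (indices mod $N$) of $-J_{x,x+1}(\xi_x\xi_{x+1}-\eta_x\eta_{x+1})$. Since each factor $\xi_x\xi_{x+1}-\eta_x\eta_{x+1}\in\{-2,0,2\}$, the triangle inequality gives $|H(\xi)-H(\eta)|\le 2\sum_{x=1}^N|J_{x,x+1}|$, and taking the supremum over $\xi,\eta$ yields the right-hand inequality. This step is routine.

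\medskip

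\textbf{Lower bound.} The idea is to exhibit a specific pair of configurations whose energy difference is large. Take $\eta\equiv(1,1,\dots,1)$, so every link overlap $\eta_x\eta_{x+1}=1$; then $H(\eta)=-\sum_x J_{x,x+1}$. Now I want to choose $\xi$ so that as many link overlaps $\xi_x\xi_{x+1}$ as possible equal $-\operatorname{sgn}(J_{x,x+1})$, which would make $-J_{x,x+1}\xi_x\xi_{x+1}=|J_{x,x+1}|$ and hence $H(\xi)=\sum_x|J_{x,x+1}|$, giving $H(\xi)-H(\eta)=2\sum_x|J_{x,x+1}|$. The obstruction is the cycle constraint: on a torus the product of all link overlaps $\prod_{x=1}^N \xi_x\xi_{x+1}=1$, so the number of edges where $\xi_x\xi_{x+1}=-1$ must be even. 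If the number of negatively-coupled edges (those with $J_{x,x+1}<0$) has the wrong parity, we cannot satisfy every edge simultaneously and must "sacrifice" at least one edge; flipping the sign of the overlap on the edge of smallest $|J_{x,x+1}|$ costs $2\min_x|J_{x,x+1}|$ relative to the ideal, and this is done at most on one edge, which accounts for the term $-4\min_x|J_{x,x+1}|$ (the factor $4$ coming from $2\times 2\min$). I would make this precise by: (i) choosing the target overlaps $s_x\in\{-1,1\}$ equal to $-\operatorname{sgn}(J_{x,x+1})$, adjusted on one minimal edge if necessary so that $\prod_x s_x=1$; (ii) observing that any assignment of overlaps with product $1$ is realized by some $\xi$ (set $\xi_1=1$ and propagate); (iii) computing $H(\xi)-H(\eta)\ge 2\sum_x|J_{x,x+1}| - 4\min_x|J_{x,x+1}|$. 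The main obstacle here is getting the parity bookkeeping and the resulting constant exactly right; everything else is elementary.

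\medskip

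\textbf{Almost-sure asymptotics.} By the strong law of large numbers applied to the i.i.d.\ sequence $(|J_{x,x+1}|)_{x\ge1}$, which are half-normal with mean $\mathbb{E}|J_{1,2}|=\sqrt{2/\pi}$, we have $\frac1N\sum_{x=1}^N|J_{x,x+1}|\to\sqrt{2/\pi}$ almost surely, so $2\sum_x|J_{x,x+1}|\sim 2\sqrt{2/\pi}\,N$. For the lower bound it suffices that $\min_x|J_{x,x+1}|=o(N)$ almost surely, which is immediate since $\min_x|J_{x,x+1}|\to 0$ a.s.\ (indeed each $|J_{x,x+1}|$ has positive density near $0$, so $\liminf$ is $0$); even the trivial bound $\min_x|J_{x,x+1}|\le |J_{1,2}|$ combined with a.s.\ finiteness gives $\min_x|J_{x,x+1}|=o(N)$. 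Squeezing $R_H$ between the two bounds, both of which are asymptotic to $2\sqrt{2/\pi}\,N$, yields $R_H\sim 2\sqrt{2/\pi}\,N$ almost surely. This step is standard once the two-sided bound is in hand.
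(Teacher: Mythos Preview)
Your upper bound and the asymptotic squeeze are fine. The lower bound argument, however, contains a genuine computational error that breaks the proof as written.

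You fix $\eta\equiv(1,\dots,1)$, so $H(\eta)=-\sum_x J_{x,x+1}$, and then build a near-maximizer $\xi$ with $H(\xi)$ close to $\sum_x|J_{x,x+1}|$. But then
\[
H(\xi)-H(\eta)\;=\;\sum_x|J_{x,x+1}|+\sum_x J_{x,x+1}\;=\;2\sum_{x:\,J_{x,x+1}>0}J_{x,x+1},
\]
which is \emph{not} $2\sum_x|J_{x,x+1}|$; indeed if every coupling is negative this difference is zero. The all-ones configuration is not, in general, anywhere near a minimizer of $H$, so it cannot serve as the second endpoint.

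The fix is exactly parallel to what you already do for $\xi$: construct $\eta$ with link overlaps $\eta_x\eta_{x+1}=+\operatorname{sgn}(J_{x,x+1})$, again adjusting on at most one minimal edge to respect the cycle parity $\prod_x\eta_x\eta_{x+1}=1$. This yields $H(\eta)\le -\sum_x|J_{x,x+1}|+2\min_x|J_{x,x+1}|$, and combining with your $H(\xi)\ge \sum_x|J_{x,x+1}|-2\min_x|J_{x,x+1}|$ gives the stated lower bound $R_H\ge H(\xi)-H(\eta)\ge 2\sum_x|J_{x,x+1}|-4\min_x|J_{x,x+1}|$. This is precisely the paper's approach: it builds both a near-minimizer and a near-maximizer by propagating spin signs along the cycle starting from $\sigma_1=1$, picking each successive spin according to the sign of the incident coupling, and absorbing any frustration at the final (minimal) bond. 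Your parity discussion is the right mechanism; you just need to apply it twice.
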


\begin{proof}
Without loss of generality, we assume $\min_x\lvert J_{x,x+1}\rvert=\lvert J_{N,1}\rvert$.
Let us fix $\sigma_{1}=1$.
Then, depending on the sign of $J_{1,2}$, we can determine $\sigma_2$ to minimize (or maximize) $H(\sigma)$.
We continue this procedure up to $\sigma_{N}$ and we have
\begin{align*}
\min_{\sigma\in\{-1,1\}^N}H(\sigma) &\leq -\sum_{x=1}^N\lvert J_{x,x+1}\rvert \ +2\min_{x = 1,\dots, N}\lvert J_{x,x+1}\rvert,\\
\max_{\sigma\in\{-1,1\}^N}H(\sigma) &\geq \sum_{x=1}^N\lvert J_{x,x+1}\rvert \ -2\min_{x = 1,\dots, N}\lvert J_{x,x+1}\rvert
\end{align*}
(the additional terms exist if frustration exist at $\sigma_N$ and $\sigma_1$).
Hence the inequality of the lemma is proven.

To show the last statement, we divide all terms by $N$ and use the law of large numbers for the folded normal distribution.
\end{proof}

\begin{lemma}\label{lem42}
If we assume  $N \geq 3$, then it follows that
\begin{align*}
\mathbb{E}\left[\lvert V\setminus V_0\rvert\right]=N\theta^2
\end{align*}
and
\begin{align*}
\mathbb{E}\left[\lvert V\setminus V_0\rvert^2\right]=N\theta^2\left(1+2\theta-3\theta^2+N\theta^2\right).
\end{align*}
\end{lemma}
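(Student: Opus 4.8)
The plan is to rewrite $\lvert V\setminus V_0\rvert$ in terms of the edge variables and then carry out a direct computation of the first two moments. On the one-dimensional torus each vertex $x$ is incident to exactly the two edges $\{x-1,x\}$ and $\{x,x+1\}$, indices read modulo $N$. So if I introduce, for each edge $\{k,k+1\}$ of the cycle, the indicator $\mathbbm{1}_k\coloneqq\mathbbm{1}\{\lvert J_{k,k+1}\rvert<\delta\}$, then $x\in V\setminus V_0$ holds exactly when $\mathbbm{1}_{x-1}=\mathbbm{1}_x=1$, hence
\[
\lvert V\setminus V_0\rvert=\sum_{x=1}^N \mathbbm{1}_{x-1}\,\mathbbm{1}_x .
\]
The family $(\mathbbm{1}_k)_{k=1}^N$ is i.i.d. with $\mathbb{P}(\mathbbm{1}_k=1)=\mathbb{P}(\lvert J_{k,k+1}\rvert<\delta)=\theta$, and the displayed quantity is a cyclic sum of products of consecutive entries of this sequence.

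For the mean, linearity of expectation together with the independence of the two distinct edges meeting at $x$ gives $\mathbb{E}[\mathbbm{1}_{x-1}\mathbbm{1}_x]=\theta^2$, so $\mathbb{E}[\lvert V\setminus V_0\rvert]=N\theta^2$.

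For the second moment I would expand $\lvert V\setminus V_0\rvert^2=\sum_{x,y}\mathbbm{1}_{x-1}\mathbbm{1}_x\mathbbm{1}_{y-1}\mathbbm{1}_y$ and sort the ordered pairs $(x,y)$ according to the overlap of the two incident-edge pairs. There are three cases: if $x=y$ (there are $N$ such pairs) the summand equals $\mathbbm{1}_{x-1}\mathbbm{1}_x$ since $\mathbbm{1}_k^2=\mathbbm{1}_k$, with expectation $\theta^2$; if $y=x\pm1$ (there are $2N$ such pairs) the product collapses to $\mathbbm{1}_{x-1}\mathbbm{1}_x\mathbbm{1}_{x+1}$ or its mirror image, a product of three distinct edge indicators, with expectation $\theta^3$; and if $y\notin\{x-1,x,x+1\}$ (there are $N(N-3)$ such pairs) the four edge indicators are distinct, with expectation $\theta^4$. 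Summing,
\[
\mathbb{E}\bigl[\lvert V\setminus V_0\rvert^2\bigr]=N\theta^2+2N\theta^3+N(N-3)\theta^4=N\theta^2\bigl(1+2\theta-3\theta^2+N\theta^2\bigr),
\]
which is the claim. Equivalently, I could note that $(\mathbbm{1}_{x-1}\mathbbm{1}_x)_x$ is a stationary $1$-dependent sequence along the cycle and extract $\mathrm{Var}(\lvert V\setminus V_0\rvert)=N(\theta^2-\theta^4)+2N(\theta^3-\theta^4)$ from its covariance structure, which yields the same answer after adding $(N\theta^2)^2$.

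This argument is essentially a bookkeeping exercise, so there is no substantial obstacle; the one delicate point — and precisely where the hypothesis $N\ge3$ is used — is the counting in the second and third cases. For $N\ge3$ the indices $x-1,x,x+1$ are pairwise distinct modulo $N$, so the edges appearing in those cases really are distinct and the factorizations into $\theta^3$ and $\theta^4$ are legitimate; for $N\le2$ these coincidences would alter both the multiplicities and the exponents, which is why the lemma is stated with the restriction $N\ge3$. I do not anticipate any difficulty beyond setting up this case split correctly.
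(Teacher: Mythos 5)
Your proposal is correct and follows essentially the same route as the paper: the paper's vertex indicators $X_i=\mathbbm{1}\{\lvert J_{i-1,i}\rvert<\delta,\ \lvert J_{i,i+1}\rvert<\delta\}$ are exactly your products $\mathbbm{1}_{i-1}\mathbbm{1}_i$, and the paper computes the second moment by the same split into diagonal, adjacent ($y=x\pm1$, expectation $\theta^3$), and non-adjacent ($N(N-3)$ pairs, expectation $\theta^4$) terms. Your remark on where $N\ge3$ enters matches the role it plays in the paper's counting.
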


\begin{proof}
For each $i=1,\dots,N$, let us define a random variable $X_{i}$ by letting
\begin{align*}
X_i=
\begin{cases}
1 & \text{if $\lvert J_{i-1,i}\rvert<\delta$ and $\lvert J_{i,i+1}\rvert<\delta$,}\\
0 & \text{otherwise}.
\end{cases}
\end{align*}
Then, by the definition of $V_0$, the condition $i\in V\setminus V_0$ is equivalent to $X_i = 1$.
Therefore, the expected value of the size of $V\setminus V_0$ will be given by
\begin{align*}
\mathbb{E}\left[\lvert V\setminus V_0\rvert\right]
= \mathbb{E}\left[\sum_{i=1}^N X_i \right]
= \sum_{i=1}^N\mathbb{E}\left[X_i\right]
= N\theta^2.
\end{align*}
Furthermore, we write
\begin{align*}
\lvert V\setminus V_0\rvert^2
=\sum_{i=1}^NX_i^2+2\sum_{i=1}^NX_iX_{i+1}+\sum_{i=1}^N\sum_{\substack{j\notin \{i-1,\, i,\, i+1\}}}X_iX_j.
\end{align*}
Here, the random variables $X_i$ and $X_{i+1}$ are not mutually independent but we have
\begin{align*}
X_iX_{i+1}=
\begin{cases}
1 & \text{if $\lvert J_{i-1,i}\rvert<\delta$,\ $\lvert J_{i,i+1}\rvert<\delta$ and $\lvert J_{i+1,i+2}\rvert<\delta$},\\
0 & \text{otherwise}.
\end{cases}
\end{align*}
Thus, it follows that the identity
\begin{align*}
\mathbb{E}\left[\lvert V\setminus V_0\rvert^2\right]
&=N\theta^2+2N\theta^3+N(N-3)\theta^4\\
&=N\theta^2\left(1+2\theta-3\theta^2+N\theta^2\right)
\end{align*}
holds, and we complete the proof.
\end{proof}

\begin{proof}[Proof of Theorem \ref{thm4}]
Let us start by splitting the probability in the left-hand side of equation (\ref{thm4a}) as 
\begin{align*}
\mP&\left(\Bigg\{\sup_{\sigma,\tau\in\{-1,1\}^N} \left\lvert H(\sigma)-H(\sigma_{V_0}, \tau_{V\setminus V_0}) \right\rvert < \vep R_H \Bigg\}\cap A \right)\\
&\geq \mP\left(\Bigg\{\sup_{\sigma,\tau\in\{-1,1\}^N} \left\lvert H(\sigma)-H(\sigma_{V_0}, \tau_{V\setminus V_0}) \right\rvert < \vep R_H \Bigg\} \cap \left\{\left| \frac{1}{N} \sum_{x = 1}^{N} |J_{x,x+1}| - \sqrt{\frac{2}{\pi}} \right| < \sqrt{\frac{2}{\pi}}  - \frac{2\delta}{\epsilon} \right\} \cap A \right)\\
& = \mP\left(\sup_{\sigma,\tau\in\{-1,1\}^N} \left\lvert H(\sigma)-H(\sigma_{V_0}, \tau_{V\setminus V_0}) \right\rvert < \vep R_H \,\middle\vert\, B \right)\,
\mP\left(B \right),
\end{align*}
where $B$ is the event given by
\begin{equation}
	B = \left\{\left| \frac{1}{N} \sum_{x = 1}^{N} |J_{x,x+1}| - \sqrt{\frac{2}{\pi}} \right| < \sqrt{\frac{2}{\pi}}  - \frac{2\delta}{\epsilon} \right\} \cap A. 
\end{equation}
From equation (\ref{Hamiltoniandiff}), Lemma \ref{lem41}, and the fact that, under condition $B$ (subset of $A$), $\min_{x = 1,\dots, N}|J_{x,x+1}| \leq \delta$, it follows that the conditional probability above satisfies
\begin{align*}
\mP\left(\sup_{\sigma,\tau\in\{-1,1\}^N} \left\lvert H(\sigma)-H(\sigma_{V_0}, \tau_{V\setminus V_0}) \right\rvert < \vep R_H \,\middle\vert\, B \right)
&\ge \mP\left( 2\delta\lvert V\setminus V_0\rvert <\vep\left( \sum_{x=1}^N\lvert J_{x,x+1}\rvert - 2\min_{x = 1,\dots, N}\lvert J_{x,x+1}\rvert\right)\,\middle\vert\, B  \right)\\
&\ge \mP\left( \frac{2\delta}{\vep} \lvert V\setminus V_0\rvert < \left( \sum_{x=1}^N\lvert J_{x,x+1}\rvert - 2\delta\right)\,\middle\vert\, B  \right)\\
&= \mP\left( \frac{2\delta}{\vep} \frac{\lvert V\setminus V_0\rvert + \epsilon}{N} < \frac{1}{N}\sum_{x=1}^N\lvert J_{x,x+1}\rvert \,\middle\vert\, B  \right)\\
&\ge \mP\left( \frac{2\delta}{\vep} < \frac{1}{N}\sum_{x=1}^N\lvert J_{x,x+1}\rvert \,\middle\vert\, B  \right) = 1,
\end{align*}
then
\begin{equation}\label{thm4:1}
	\mP\left(\sup_{\sigma,\tau\in\{-1,1\}^N} \left\lvert H(\sigma)-H(\sigma_{V_0}, \tau_{V\setminus V_0}) \right\rvert < \vep R_H \,\middle\vert\, B \right) = 1.
\end{equation}
The rest of the proof consists of estimating the probability of the event $B$. Let us write
\begin{equation}\label{thm4:2}
\mP(B)
\geq \mP(A)
+ \mP\left(\left| \frac{1}{N} \sum_{x = 1}^{N} |J_{x,x+1}| - \sqrt{\frac{2}{\pi}} \right| < \sqrt{\frac{2}{\pi}} - \frac{2\delta}{\epsilon} \right)
-1.
\end{equation}
It follows from Chebyshev's inequality that 
\begin{equation}\label{thm4:3}
	\mP\left(\left| \frac{1}{N} \sum_{x = 1}^{N} |J_{x,x+1}| - \sqrt{\frac{2}{\pi}} \right| \geq  \sqrt{\frac{2}{\pi}} - \frac{2\delta}{\epsilon}  \right)\leq \frac{\sigma_{FG}^{2}}{N \left(\sqrt{\frac{2}{\pi}} - \frac{2\delta}{\epsilon}\right)^{2}},
\end{equation}
where $\sigma_{FG}^{2}$ is the variance of the folded Gaussian random variable $Y = |J_{1,2}|$ which is equal to $1 - \frac{2}{\pi}$. By using equations (\ref{thm4:1}), (\ref{thm4:2}) and (\ref{thm4:3}), equation (\ref{thm4a})  follows.

In particular, if $A$ is the event given by $A = \{C N^{\alpha} \leq |V \backslash V_{0}| < N\}$. Note that
\begin{equation}\label{thm4:4}
\mP(A) = \mP(|V \backslash V_{0}| \geq C N^{\alpha}) - \mP(|V \backslash V_{0}| = N),
\end{equation}
where $\mP(|V \backslash V_{0}| = N) = \theta^{N}$. By the Paley-Zygmund inequality and Lemma \ref{lem42}, we have
\begin{align*}
\mP\left(\lvert V\setminus V_0\rvert \geq C N^{\alpha} \right)
&\ge { \left( 1 - \frac{C N^{\alpha}}{\mathbb{E}[\lvert V\setminus V_0\rvert]} \right)^2} \frac{\mathbb{E}[\lvert V\setminus V_0\rvert]^2}{\mathbb{E}[\lvert V\setminus V_0\rvert^2]}\\
&= { \left( 1 - \frac{C N^{\alpha}}{N \theta^{2}} \right)^2}\frac{1}{1+\frac{1+2\theta-3\theta^2}{N\theta^2}}
\end{align*}
for $N$ sufficiently large, therefore, equation (\ref{thm4b}) holds.
\end{proof}
 
\subsection{Generalizations}

The most natural step in further investigations is to extend the results obtained in Section \ref{sec:1dtorus} to the case where we include i.i.d. standard Gaussian external fields, and also extend such results to a larger class of examples such as to a $d$-dimensional torus or even to 
finite graphs with bounded degree. Note that, by assuming  the absence of external fields, in the same way as we obtained inequality (\ref{Hamiltoniandiff}), one can show that
\begin{equation}\label{Hamiltoniandiff2}
		|H(\sigma) - H(\sigma_{V_{0}}, \tau_{V \backslash V_{0}})| \leq  2\delta \sum_{y \in V \backslash V_{0}} \text{deg}(y)
\end{equation}
holds for any graph. So, analogously as in the one-dimensional torus case, it is expected that if we find a lower bound for $R_{H}$, as we did in Lemma \ref{lem41}, which is comparable to the right-hand side of equation (\ref{Hamiltoniandiff2}), then we may derive an extension of our results for a larger class of graphs. Some numerical results suggest that, for an Ising spin system in a $d$-dimensional torus with i.i.d. standard Gaussian spin-spin couplings and without external fields, $R_{H}$ is still of order $N$, but it still lacks a rigorous proof of that observation  due to the difficulty of dealing with frustrated configurations in a higher dimensional torus.

The simulations presented in this section were performed by using a modified version of the stochastic cellular automata algorithm studied in \cite{HKKS19,HKKS21} to estimate the maximum and minimum value 
of the Hamiltonian $H$ in order to find an approximation of  $R_{H}$ corresponding to different values of $N$. Note that, in such plots, each dot represents the value of $R_{H}$ (resp. $R_{H}/N$) corresponding to a torus with $N$ vertices for a realization of the random values of spin-spin couplings (i.i.d. standard Gaussian random variables). In the one-dimensional case (see Figure \ref{fig:torus1d}), we see that the value of $R_{H}/N$ approximates the value $2 \sqrt{2/\pi} \approx 1.5957$, as expected due to Lemma \ref{lem41}.
\begin{figure}[H]
        \centering
        \begin{subfigure}[b]{0.44\textwidth}
                \includegraphics[width=\textwidth]{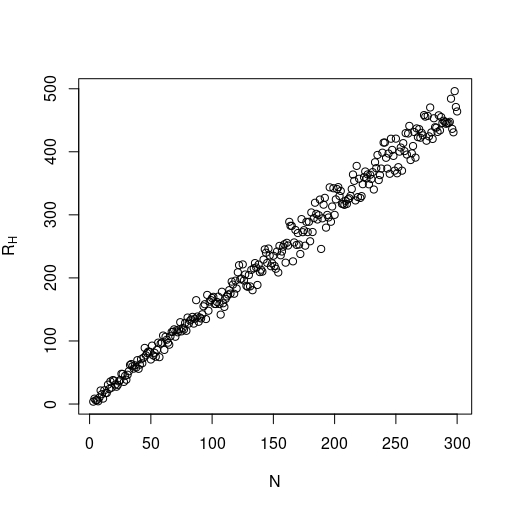}
                \caption{}
                \label{fig:lab0}
        \end{subfigure}       
        \begin{subfigure}[b]{0.44\textwidth}
                \includegraphics[width=\textwidth]{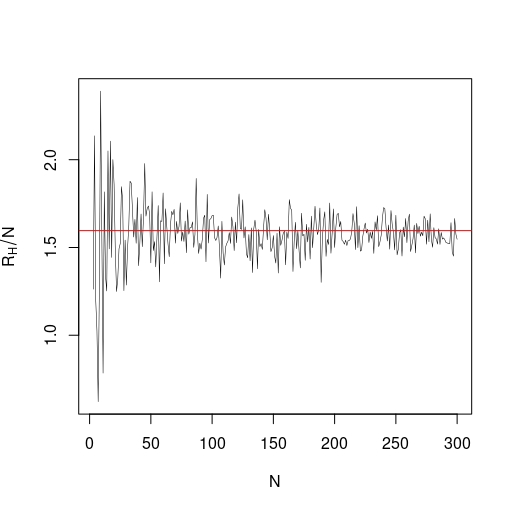}
                \caption{}
                \label{fig:lab1}
        \end{subfigure}
    \caption{The dependence of $R_{H}$ with respect to the size of the system $N$ in the one dimensional case and its asymptotic behavior as $N$ grows.}
    \label{fig:torus1d}
\end{figure}

Now, for the two and three dimensional cases (see Figure \ref{fig:torus}), when we consider larger values of $N$, the value of  $R_{H}/N$ seems to approximate the values $2.564$ and $3.329$, respectively. Note that such simulated values represent lower bounds for the real value of the limit $R_{H}/N$ as $N$ approaches infinity, so the true limits are still unknown. Furthermore, we conjecture that such limit exists in any dimension and the random variable $R_{H}/N$ converges almost surely due to the fact that, in higher dimension, its simulated values seem to fluctuate less around an asymptotic limit as compared to the one-dimensional case.

It is straightforward to show that, for the $d$-dimensional torus, we have
\begin{equation*}
R_H \leq 2 \sum_{k = 1}^{d}\sum_{i \in V} |J_{i, i+\mathbf{e_k}}|,
\end{equation*}
where $\mathbf{e_k}$ stands for the $k$-th canonical vector of the $d$-dimensional Euclidean space, then
\begin{equation}
\limsup_N \frac{R_H}{N} \leq 2d \sqrt{\frac{2}{\pi}}.
\end{equation}
Moreover, it follows from the fact that $R_H \geq \sqrt{\sum_b J_b^2}$ (see Lemma \ref{lem:total margin}) and the Cauchy–Schwarz inequality that 
\begin{equation}
\frac{1}{\sqrt{N}} \sum_b |J_b| \leq R_H.
\end{equation}
Therefore, we see that there is still room for improvement and the need of rigorous proofs about the existence and determination of the limit $\lim_{N \to \infty}R_H/N$, originating a mathematical problem which is interesting by itself.
   
   \begin{figure}[H]
        \centering
        \begin{subfigure}[b]{0.44\textwidth}
                \includegraphics[width=\textwidth]{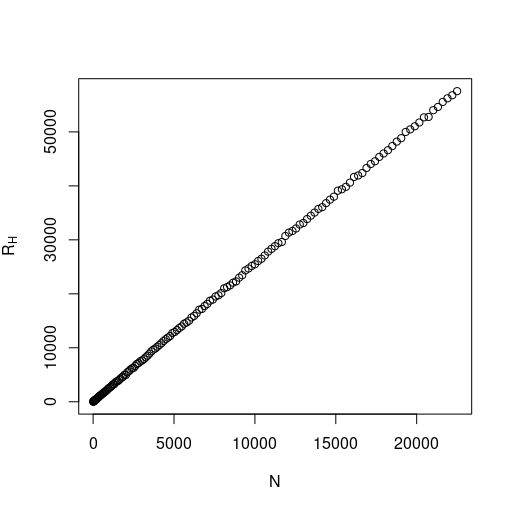}
                \caption{Two dimensional.}
                \label{fig:lab2}
        \end{subfigure}       
        \begin{subfigure}[b]{0.44\textwidth}
                \includegraphics[width=\textwidth]{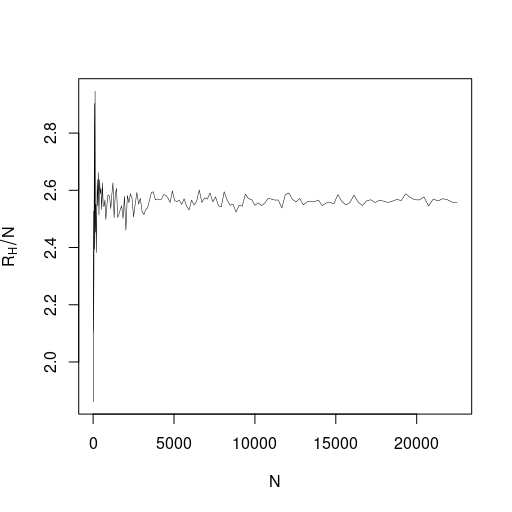}
                \caption{Two dimensional.}
                \label{fig:lab3}
        \end{subfigure}
        \begin{subfigure}[b]{0.44\textwidth}
                \includegraphics[width=\textwidth]{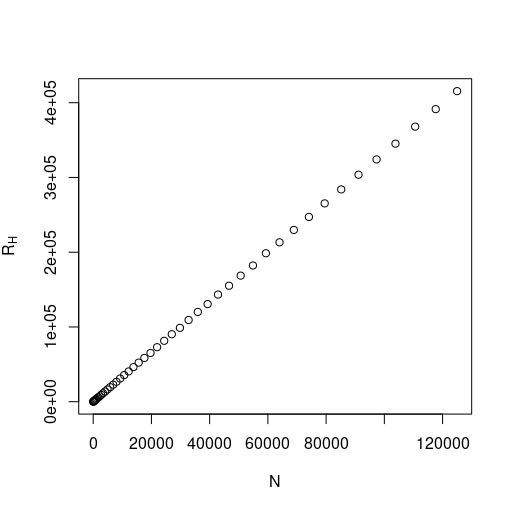}
                \caption{Three dimensional.}
                \label{fig:lab4}
        \end{subfigure}       
        \begin{subfigure}[b]{0.44\textwidth}
                \includegraphics[width=\textwidth]{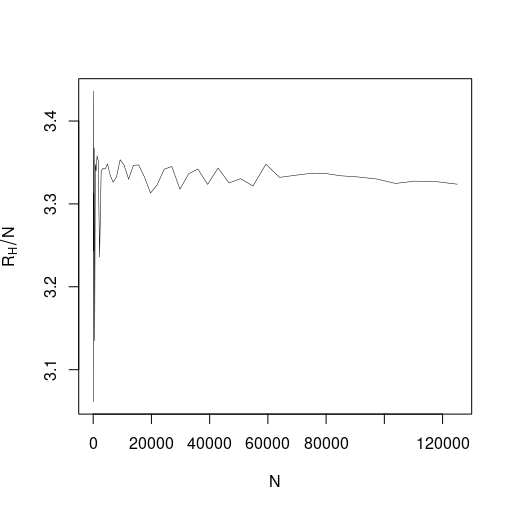}
                \caption{Three dimensional}
                \label{fig:lab5}
        \end{subfigure}
    \caption{The dependence of $R_{H}$ with respect to the size of the system $N$ in the two dimensional and three dimensional cases and their asymptotic behavior as $N$ grows.}\label{fig:torus}
\end{figure}
 
\subsection*{Acknowledgment}
This work was supported by JST CREST Grant Number JP22180021, Japan. We would like to thank Takashi Takemoto and Normann Mertig of Hitachi, Ltd., for providing us with a stimulating platform for the weekly meeting at Global Research Center for Food \& Medical Innovation (FMI) of Hokkaido University. We would also like to thank Hiroshi Teramoto of Kansai University, as well as  Masamitsu Aoki, Yoshinori Kamijima, Katsuhiro Kamakura, Suguru Ishibashi and Takuka Saito of Mathematics Department, for valuable comments and encouragement at the aforementioned meetings at FMI.



\begin{thebibliography}{99} 
\bibitem{L14} A. Lucas, Ising formulations of many NP problems.
{\it Front.\ Phys.\,} {\bf 12} (2014): https://doi.org/10.3389/fphy.2014.00005.
\bibitem{DSS12} P.\ Dai Pra, B.\ Scoppola, E.\ Scoppola, Sampling from a Gibbs measure with pair interaction by means of PCA, {\it J.\ Stat.\ Phys.\,} {\bf149} (2012): 722-737.
\bibitem{H88}B.\ Hajek, Cooling schedules for optimal annealing. {\it Math. Oper. Res}. {\bf 13} (2), (1988): 311-329.
\bibitem{R13} M.\ C.\ Robini, Theoretically ground acceleration techniques for simulated annealing, Handbook of Optimization (2013).
\bibitem{HKKS19} S.\ Handa, K.\ Kamakura, Y.\ Kamijima, A.\ Sakai, Finding optimal solutions by stochastic cellular automata, arXiv:1906.06645.
\bibitem{HKKS21} B.\ H.\ Fukushima-Kimura, S.\ Handa, K.\ Kamakura, Y.\ Kamijima, A.\ Sakai, Mixing time and simulated annealing for the stochastic cellular automata, arXiv:2007.11287v2.
\bibitem{AMH18} T.\ Albash, V.\ M.\ Mayor, I.\ Hen, Analog Errors in Ising Machines, arXiv:1806.03744.
\end{thebibliography}
\end{document}